\newcommand{\argmin}{\mathop{\rm argmin}\limits}
\newcommand{\proj}{\mathrm{proj}}
\newcommand{\dist}{\mathrm{dist}}
\newtheorem{theorem}{Theorem}[section]
\newtheorem{corollary}{Corollary}[section]
\newtheorem{lemma}{Lemma}[section]
\newtheorem{proposition}{Proposition}[section]
\newtheorem{assumption}{Assumption}[section]
\def\innerprod<#1>{\langle #1 \rangle}
\title{Optimal reinsurance and investment via stochastic projected gradient method based on Malliavin calculus}
\author{Yuta Otsuki\thanks{Department of Data Science for Business Innovation, Chuo University, Japan, E-mail: a19.8s78@g.chuo-u.ac.jp}
\and
Shotaro Yagishita\thanks{The Institute of Statistical Mathematics, Japan, E-mail: syagi@ism.ac.jp}}
\date{\today}
\begin{document}

\maketitle

\begin{abstract}
This paper proposes a new approach using the stochastic projected gradient method and Malliavin calculus for optimal reinsurance and investment strategies.
Unlike traditional methodologies, we aim to optimize static investment and reinsurance strategies by directly minimizing the ruin probability.
Furthermore, we provide a convergence analysis of the stochastic projected gradient method for general constrained optimization problems whose objective function has H\"older continuous gradient.
Numerical experiments show the effectiveness of our proposed method.
\end{abstract}

\section{Introduction}\label{sec:intro}
Non-life insurance companies employ reinsurance and financial asset investments as essential elements of their risk management strategy.
Reinsurance, in particular, helps insurers mitigate the risk of ruin by shielding them from large losses and ensuring solvency.
In risk theory, several notions of ruin probability are considered, such as
\begin{align}
    &\mathbb{P} \left(\inf_{t \ge 0} U_t < 0\right),  \label{eq:ruin_prob_infty}\\
    &\mathbb{P} \left(\inf_{0\le t \le T} U_t < 0\right),\label{eq:ruin_prob_finte}\\
    &\mathbb{P} \left(U_T< 0\right), \label{eq:ruin_prob}
\end{align}
where $U_t$ is the surplus process of the insurer.
In a vast literature, finding reinsurance and investment strategies that minimize the ruin probability has been considered.
Such studies generally fall into two primary approaches: the Hamilton–Jacobi–Bellman (HJB) approach and the adjustment coefficient approach.

The HJB approach deals with a stochastic optimal control problem that aims to minimize the probability of financial ruin while considering dynamic investment and reinsurance strategies.
Optimal strategies are derived from the solution of the HJB equation.
\citet{schmidli2001optimal} considered optimal proportional reinsurance by minimizing the infinite time ruin probability \eqref{eq:ruin_prob_infty} in the Cramér-Lundberg model and presented a method for solving the HJB equation to determine the optimal strategy.
\citet{schmidli2002minimizing} further investigated optimal reinsurance and investment.
While the HJB approach directly minimizes the ruin probability, they are less realistic in insurance operations because the decision variable is a continuous-time stochastic process.

The adjustment coefficient approach is a method based on the Lundberg inequality
\begin{align}\label{eq:Lun_ineq}
    \mathbb{P}\left(\inf_{t\ge0} U_t < 0\right) < e^{-R(\eta)u}
\end{align}
where $R(\eta)$ is called the adjustment coefficient and $\eta$ represents strategies, and minimizing a function on the right-hand side of \eqref{eq:Lun_ineq}, the upper bound of the ruin probability, with respect to $\eta$.
\citet{waters1983some} studied optimal proportional reinsurance and excess of loss reinsurance in the Cramér-Lundberg model.
\citet{waters1983some} proved that the adjustment coefficient is a unimodal function for proportional reinsurance and also for excess reinsurance under certain assumptions.
\citet{CENTENO1986169} investigated optimal proportional reinsurance, excess of loss reinsurance, and their combination for the Sparre Anderson model.
\citet{hald2004maximisation} derived the closed-form expressions of the optimal proportional reinsurance under the Cramér-Lundberg model and the Sparre Anderson model for the first time.
\citet{liang2007optimal} also investigated the optimal proportional reinsurance in a jump-diffusion model.
Although only reinsurance was considered in the above studies, \citet{liang2008upper} considered the combination of proportional reinsurance and investment for a jump-diffusion model.
\citet{liang2012optimal} also studied the optimal investment and proportional reinsurance for the Sparre Andersen model.
There have been subsequent studies on reinsurance and investment using the adjustment coefficient approach (see, e.g., \citep{hu2012optimal,zhang2016optimal,meng2023multiple}).
The strategies obtained through the adjustment coefficient approach are static, where the decision variables are real numbers, as opposed to dynamic strategies, such as those in the HJB approach.
Therefore, the adjustment coefficient approach is more practical than the HJB approach.
On the other hand, this approach minimizes an upper bound for the ruin probability, rather than directly minimizing the ruin probability itself.

In this paper, we attempt to minimize the ruin probability directly.
Precisely, a constrained optimization problem that minimizes \eqref{eq:ruin_prob} with respect to proportional reinsurance and investment is considered.
Our approach is more practical than the HJB approach because the strategies in ours is static as well as that in the adjustment coefficient approach.
Fortunately, the constraint is simple and using Malliavin calculus for the Poisson process \citep{privault2004malliavin} yields an unbiased estimator of the gradient of the objective function.
Consequently, we can apply a mini-batch stochastic projected gradient method to the optimization problem.
Furthermore, the convergence of the algorithm is shown by investigating the H\"older continuity of the gradient.

The contributions of this paper are given as follows:
\begin{itemize}
    \item There has been no approach to consider static strategies that directly minimize the ruin probability.
    Furthermore, to our knowledge, the adjustment coefficient approach cannot be employed in the surplus process with the geometric Brownian motion as the asset.
    In contrast, the geometric Brownian motion can be used in our approach.

    \item \citet{ewald2006malliavin} and \citet{ewald2006new} proposed a gradient-type method using the Malliavin calculus for the Brownian motion to compute the gradient of the objective function.
    Their algorithms assume that the gradient is calculated exactly and does not account for Monte Carlo errors.
    In contrast, we enhance computational efficiency by employing a stochastic gradient-based algorithm that does not compute the gradient exactly.
    In addition, while they did not include any convergence analysis, we provide convergence results by investigating the properties of gradients.

    \item We show the convergence of a mini-batch stochastic projected gradient method for constrained problems minimizing objective function with H\"older continuous gradient, not limited to the ruin probability minimization problems.
    In the literature \citep{lei2019stochastic,patel2021stochastic,patel2022global}, theoretical analyses of the stochastic gradient method under the assumption of H\"older continuity of the gradient are provided.
    Although they are results for unconstrained problems, we establish non-asymptotic analysis for constrained optimization problems.
\end{itemize}

The rest of the paper is organized as follows.
The remainder of this section is devoted to notation and preliminary results.
In the next section, we establish a surplus model for the insurer and formulate an optimization problem minimizing the ruin probability.
Section \ref{sec:grad_cal} is devoted to properties of the gradient of the objective function.
In Section \ref{sec:sgd}, we show a convergence result for the stochastic projected gradient method under the assumption of H\"older continuity of the gradient.
In Section \ref{sec:experiment}, we present numerical experiments demonstrating the effectiveness of our proposed method.
Finally, Section \ref{sec:conclusion} concludes the paper with some remarks.

\subsection{Notation and Preliminaries}\label{sec:notation}
The set of positive integers and the set of positive real numbers are denoted by $\mathbb{N}$ and $\mathbb{R}_{++}$, respectively.
For an integer $n$, the set $[n]$ is defined by $[n] \coloneqq \{1, \cdots, n\}$.
The standard inner product of $x,y \in \mathbb{R}^d$ is denoted by $\innerprod<x,y>$.
The $\ell_2$ norm of $x \in \mathbb{R}^d$ is defined by $\|x\| \coloneqq \sqrt{\innerprod<x,x>}$ and the uniform norm of a function $f: \mathbb{R} \to \mathbb{R}$ is defined by $\|f\|_{\infty} \coloneqq \sup_{x \in \mathbb{R}} |f(x)|$.
The space of continuously differential functions $w$ on $[0,T]$ such that $w(0)=w(T)=0$ is denoted by $C_0^1([0,T])$.
Let $\dist(x, \mathcal{X})$ denote the distance between $x \in \mathbb{R}^d$ and $\mathcal{X} \subset \mathbb{R}^d$, that is, $\dist(x, \mathcal{X}) \coloneqq \inf \left\{ \|x-y\| ~|~ y \in \mathcal{X}\right\}$.
The partial derivative of $f:\mathbb{R}^d \to \mathbb{R}$ with respect to $\xi \in \mathbb{R}$ is denoted by $\partial_\xi f$.
For a convex function $f:\mathbb{R}^d \to (-\infty,\infty]$, the subdifferential of $f$ at $x\in \mathbb{R}^d$ is defined by
\begin{align}
    \partial f(x) \coloneqq \left\{g \in \mathbb{R}^d ~|~ \forall y \in \mathbb{R}^d,~f(y) \geq f(x) + \innerprod< g,y-x> \right\}.
\end{align}
The Euclidean projection of a point $x \in \mathbb{R}^d$ onto a nonempty closed convex set $\mathcal{X} \subset \mathbb{R}^d$ is defined by
\begin{align}\label{eq:projection_opt}
     \proj_{\mathcal{X}}(x) \coloneqq \argmin_{y \in \mathcal{X}}  \|x-y\|^2.
\end{align}
For any $\gamma > 0,~ g \in \mathbb{R}^d,~x \in \mathbb{R}^d$, we define the projected gradient mapping
\begin{align}\label{eq:gpg}
     P_{\mathcal{X}}(x,g,\gamma) \coloneqq \frac{1}{\gamma}(x-x^{+}),
\end{align}
where $x^{+} \coloneqq \proj_{\mathcal{X}}(x-\gamma g)$.

The following inequality is frequently used in Section \ref{sec:grad_cal}.
\begin{lemma}\label{lemma_0}
For any $\{x_i\}_{i\in[n]} \subset \mathbb{R}$ and $p>0$, it holds that
    \begin{align}\label{:eq:lm0}
        \left|\sum_{i=1}^{n} x_i\right|^p \le n^p \sum_{i=1}^{n} \left|x_i\right|^p
    \end{align}
\end{lemma}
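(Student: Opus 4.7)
The inequality is a very weak form of the usual power mean/Jensen bound, and the cleanest route is to combine the triangle inequality with a crude maximum bound, avoiding any case split on whether $p \ge 1$ or $p < 1$. The plan is to first apply the triangle inequality to get $\bigl|\sum_{i=1}^n x_i\bigr| \le \sum_{i=1}^n |x_i|$, then dominate the right-hand side by $n \max_{i \in [n]} |x_i|$, raise both sides to the $p$-th power (using monotonicity of $t \mapsto t^p$ on $[0,\infty)$ for $p > 0$), and finally bound $\max_{i \in [n]} |x_i|^p \le \sum_{i=1}^n |x_i|^p$. Chaining these steps gives
\begin{align}
\left|\sum_{i=1}^n x_i\right|^p \le \left(n \max_{i \in [n]} |x_i|\right)^p = n^p \max_{i \in [n]} |x_i|^p \le n^p \sum_{i=1}^n |x_i|^p,
\end{align}
which is exactly the claim.

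An alternative, slightly sharper path would be to invoke Jensen's inequality applied to the convex function $t \mapsto t^p$ when $p \ge 1$, yielding the tighter constant $n^{p-1}$, and to use subadditivity of $t \mapsto t^p$ on $[0,\infty)$ when $0 < p < 1$, yielding constant $1$; in either case the stated $n^p$ bound follows since $n^{p-1} \le n^p$ for $n \ge 1$ and $1 \le n^p$. I would not take this route in the write-up because it needlessly introduces a case distinction that the weaker statement does not require.

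There is no real obstacle here: every step is elementary, and the only thing to be mindful of is that the lemma is stated for arbitrary $p > 0$, so any argument relying on convexity of $t \mapsto t^p$ must be supplemented for $p \in (0,1)$. The triangle-plus-max approach sidesteps this issue entirely by never invoking convexity, which is why I prefer it as the main proof.
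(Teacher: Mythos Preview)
Your proof is correct and follows essentially the same approach as the paper: triangle inequality, bound the sum of absolute values by $n$ times the maximum, raise to the $p$-th power using monotonicity, and then bound the maximum by the sum. The paper's proof is line-for-line the same chain of inequalities you wrote down.
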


\begin{proof}
We have
\begin{align}
    \left|\sum_{i=1}^{n} x_i\right|^p 
    \le \left(\sum_{i=1}^{n} |x_i| \right)^p
    \le \left(\sum_{i=1}^{n} \max\{|x_1|,\ldots,|x_n|\}\right)^p
    &\le n^p \max\{|x_1|,\ldots,|x_n|\}^p \\
    &= n^p \max\{|x_1|^p,\ldots,|x_n|^p\} \\
    &\le n^p \sum_{i=1}^{n} \left|x_i\right|^p,
\end{align}
where the first inequality follows from the triangle inequality and the second one from the monotonicity of $(\cdot)^p$.
\end{proof}
The following lemmas are used in the convergence analysis in Section \ref{sec:sgd}.
They are the special cases of Lemma 1 and Proposition 1 in \citep{ghadimi2016mini}.

\begin{lemma}\label{lemma_1}
For any $x\in \mathcal{X},~g\in \mathbb{R}^d$ and $\gamma>0$, we have
    \begin{align}\label{:eq:lm1}
        \innerprod< g,P_\mathcal{X}(x,g,\gamma) > \geq  \|P_\mathcal{X}(x,g,\gamma) \|^2.
    \end{align}
\end{lemma}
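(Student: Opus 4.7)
The plan is to use the variational characterization of the Euclidean projection onto a closed convex set. Specifically, since $x^{+} = \proj_{\mathcal{X}}(x - \gamma g)$, the first-order optimality condition for the minimization problem \eqref{eq:projection_opt} yields
\begin{align}
    \innerprod< (x - \gamma g) - x^{+},\, y - x^{+} > \le 0 \quad \text{for all } y \in \mathcal{X}.
\end{align}

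The key step is to exploit the hypothesis $x \in \mathcal{X}$ by plugging $y = x$ into this variational inequality. Doing so gives
\begin{align}
    \innerprod< x - x^{+},\, x - x^{+} > \le \gamma \innerprod< g,\, x - x^{+} >,
\end{align}
that is, $\|x - x^{+}\|^{2} \le \gamma \innerprod< g,\, x - x^{+} >$. Dividing both sides by $\gamma^{2} > 0$ and recalling the definition $P_{\mathcal{X}}(x, g, \gamma) = \frac{1}{\gamma}(x - x^{+})$ in \eqref{eq:gpg} then yields exactly \eqref{:eq:lm1}.

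There is no real obstacle here; the main point is simply to remember that the projection can be characterized by a variational inequality (equivalently, by the statement that $(x - \gamma g) - x^{+}$ lies in the normal cone to $\mathcal{X}$ at $x^{+}$) and that $x$ itself is an admissible test point because $x \in \mathcal{X}$. Everything else is algebraic rearrangement and scaling by $1/\gamma^{2}$.
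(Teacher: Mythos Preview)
Your proof is correct. The paper does not give its own proof of this lemma but simply cites it as a special case of Lemma~1 in \citep{ghadimi2016mini}; your argument via the variational inequality for the projection with the test point $y = x \in \mathcal{X}$ is exactly the standard proof that underlies that reference.
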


\begin{lemma}\label{lemma_2}
For any $g_1,~ g_2 \in \mathbb{R}^d$, we have
   \begin{align}\label{eq:prop1}
       \| P_{\mathcal{X}}(x,g_1,\gamma) - P_{\mathcal{X}}(x,g_2,\gamma) \| \le  \| g_1 - g_2  \|.
   \end{align}
\end{lemma}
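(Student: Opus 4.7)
The plan is to reduce the inequality directly to the well-known non-expansivity of the Euclidean projection onto a nonempty closed convex set. From the definition \eqref{eq:gpg}, we have
\begin{align}
P_{\mathcal{X}}(x,g_1,\gamma) - P_{\mathcal{X}}(x,g_2,\gamma) = \frac{1}{\gamma}\left[\proj_{\mathcal{X}}(x-\gamma g_2) - \proj_{\mathcal{X}}(x-\gamma g_1)\right],
\end{align}
so the left-hand side of \eqref{eq:prop1} equals $\tfrac{1}{\gamma}$ times the distance between two projections.

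The next step is to invoke the standard fact that $\proj_{\mathcal{X}}$ is $1$-Lipschitz on $\mathbb{R}^d$. This is immediate from the variational characterization of projection: for any $u,v\in\mathbb{R}^d$, the optimality conditions for the minimization in \eqref{eq:projection_opt} yield $\innerprod<u-\proj_{\mathcal{X}}(u),\,y-\proj_{\mathcal{X}}(u)>\le 0$ and $\innerprod<v-\proj_{\mathcal{X}}(v),\,y-\proj_{\mathcal{X}}(v)>\le 0$ for every $y\in\mathcal{X}$; setting $y=\proj_{\mathcal{X}}(v)$ in the first inequality and $y=\proj_{\mathcal{X}}(u)$ in the second and adding them gives $\|\proj_{\mathcal{X}}(u)-\proj_{\mathcal{X}}(v)\|^2 \le \innerprod<u-v,\proj_{\mathcal{X}}(u)-\proj_{\mathcal{X}}(v)>$, whence $\|\proj_{\mathcal{X}}(u)-\proj_{\mathcal{X}}(v)\|\le\|u-v\|$ by Cauchy--Schwarz.

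Applying this with $u = x-\gamma g_1$ and $v = x-\gamma g_2$ bounds the projection difference by $\|u-v\|=\gamma\|g_1-g_2\|$, and dividing by $\gamma$ yields exactly \eqref{eq:prop1}. There is really no obstacle here; the only choice is whether to cite the non-expansivity of the projection as a known fact or to include the short variational argument outlined above. Given the elementary nature of the result (it is quoted as a special case of a proposition in \citep{ghadimi2016mini}), I would keep the proof to essentially two lines, citing non-expansivity of $\proj_{\mathcal{X}}$ and performing the division by $\gamma$.
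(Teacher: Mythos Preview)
Your argument is correct: expanding the definition \eqref{eq:gpg} and applying the $1$-Lipschitz property of $\proj_{\mathcal{X}}$ (which you justify via the standard variational inequality plus Cauchy--Schwarz) immediately gives the bound. The paper itself does not supply a proof; it simply records the lemma as a special case of Proposition~1 in \citep{ghadimi2016mini}, so your short derivation is entirely in line with what the authors rely on.
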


\section{Formulation of ruin probability minimization problem}\label{sec:model}
In this section, we define a surplus process of the insurer and then consider an optimization problem minimizing the ruin probability \eqref{eq:ruin_prob}.
The underlying probability space is denoted by $(\Omega,\mathcal{F},\mathbb{P})$.
Let $\{N_t\}_{t \geq 0}$ be the Poisson process with intensity $\lambda > 0$, the claim size $\{X_i\}_{i \in \mathbb{N}}$ be i.i.d. positive random variables independent of $\{N_t\}_{t \geq 0}$.
Suppose that the volume of claims varies with a constant inflation or deflation rate $r \neq 0$. Then, the surplus process $\{ U_t\}_{t \geq 0}$ is defined by
\begin{align}\label{eq:classic_model}
    \begin{split}
    U_t =~ & u + c_1t - \sum_{i=1}^{N_t}e^{rT_i}X_i,
    \end{split}
\end{align}
where $u > 0$ is the initial surplus, $c_1 > 0$ is the premium rate, and $\{T_i\}_{i\in \mathbb{N}}$ is jump times.
The process \eqref{eq:classic_model} is a classical risk model with inflation.
In \citep{taylor1979probability}, although the volume of premium income also varies with inflation, only the claim size varies with inflation in this paper because it would be unacceptable for the policyholder to vary premium income dynamically in practice. 
Please refer to \citep{grandell2012aspects} for the ruin theory in a classical risk model.

We consider that the insurer handles proportional reinsurance with retention level $b \in (0,1]$ and invests in $m \in \mathbb{N}$ assets.
Let $p = (p^{(1)},\ldots,p^{(m)})$ be investment ratio of initial surplus $u$ to $m$ assets satisfying $p^{(1)} + \cdots + p^{(m)}  = 1$ and $p^{(j)} \geq 0$ for all $j=1,\ldots,m$.
Then, the surplus process $U_t(p,b)$ is defined as
\begin{align}\label{eq:surplus_model}
    \begin{split}
    U_t(p,b) =~ & up^\top S_t+c_1t-(1-b)c_2t - 
    \sum_{i=1}^{N_t}be^{rT_i}X_i,
    \end{split}
\end{align}
where $c_2 > c_1 > 0$, $S_t = (S_t^{(1)},\ldots,S_t^{(n)})^\top$, $(1-b)c_2$ is the reinsurance premium rate, and $S_t^{(j)}>0$ is the price of $j$th asset at the time $t$ satisfying $S_0^{(j)}=1$, for example, the geometric Brownian motion.

One of the purposes of reinsurance is to reduce the insurer's ruin probability.
From this perspective, it is reasonable to determine the investment ratio $p$ and retention level $b$ that minimizes the ruin probability.
Thus, we consider solving the following optimization problem.

\begin{align}\label{eq:opt}
\begin{split}
\underset{(p,b)\in{\mathbb{R}^{m+1}}} {\text{minimize}} & \quad F(p,b)\coloneqq \mathbb{P}( U_T(p,b) < 0) \\
\text{subject to}  & \quad \underline{b}\le b\le 1,\\
& \quad  p^{(1)} + \cdots + p^{(m)}  = 1,\\
&  \quad   p^{(j)} \geq 0, ~~ j=1,\ldots,m,
\end{split}
\end{align}
where we assume $\underline{b} \ge 1-c_1/c_2 > 0$.
This is a realistic assumption such that the insurer never ruins without claims.
Therefore, the objective function $F(p,b)$ can be expressed as follows:
\begin{equation}\label{eq:obj_function}
F(p,b) = \mathbb{E}[\bm{1}_A\bm{1}_{\{U_T(p,b) < 0\}}],    
\end{equation}
where $A\coloneqq\{N_T\geq1\}$.
In what follows, $\mathcal{C}$ denotes the feasible region of the optimization problem \eqref{eq:opt}.

\section{Unbiased estimator of gradient of ruin probability}\label{sec:grad_cal}
If the gradient of \eqref{eq:obj_function} is computed, one can apply a gradient-type algorithm for solving the problem \eqref{eq:opt}.
However, the indicator function $\bm{1}_{\{U_T(p,b) < 0\}}$ in \eqref{eq:obj_function} is not only nondifferentiable but also discontinuous.
Consequently, it is difficult to interchange of differentiation and expectation.
Although an alternative method to estimate the gradient is the use of the finite difference, \citet{privault2004malliavin} showed that this approximation performs poorly when combined with the Monte Carlo method.
Referring to the idea of \citet{privault2004malliavin}, we consider obtaining the gradient by applying the Malliavin calculus for the Poisson process.
As a result, we have access to a computable unbiased estimator of the gradient and can exploit a stochastic gradient-based algorithm.
To this end, we make assumptions on $w:[0,T]\to\mathbb{R}$ that is used to construct the unbiased estimator.

\begin{assumption}\label{assumption_grad}
    ~
    
    \begin{enumerate}[(a)]
        \item $w \in C_0^1([0,T]),~ w(t)>0~ (0<t<T),~ w^{-1} \in L^4(0,T)$.
        \item $\mathbb{E}[X_1^{-4}] < \infty,\ \mathbb{E}\left[\left|S_T^{(j)}\right|^4\right] < \infty,~j=1,\ldots,m$.
    \end{enumerate}
\end{assumption}

The following lemma plays an important role in obtaining the unbiased estimator of the gradient.

\begin{lemma}\label{lemma_3}
Suppose Assumption \ref{assumption_grad} (a).
Let $\{Y_i\}_{i \in \mathbb{N}}$ be i.i.d. positive random variables independent of $\{N_t\}_{t \geq 0}$, and assume that $\mathbb{E}[Y_1^{-4}] < \infty$.
Then, it holds that 
\begin{align}
    \mathbb{E}\left[ \bm{1}_A\left( \sum_{i=1}^{N_T} w(T_i)^{-4}Y_i^{-4} \right)\right] = \lambda\mathbb{E}\left[ Y_1^{-4}\right] \int_{0}^{T} w(t)^{-4} dt.
\end{align}
\end{lemma}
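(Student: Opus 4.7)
The plan is to compute the expectation by conditioning on $N_T$ and exploiting the classical fact that, given $N_T=n$, the jump times $(T_1,\ldots,T_n)$ are distributed as the order statistics of $n$ i.i.d.\ uniform random variables on $[0,T]$. Since everything in sight is nonnegative, Tonelli's theorem will let me freely swap sums and expectations without worrying about integrability beyond what Assumption \ref{assumption_grad}(a) and the hypothesis $\mathbb{E}[Y_1^{-4}]<\infty$ already guarantee.

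Concretely, I would first write
\begin{align}
    \mathbb{E}\!\left[ \bm{1}_A \sum_{i=1}^{N_T} w(T_i)^{-4} Y_i^{-4} \right]
    = \sum_{n=1}^\infty \mathbb{P}(N_T = n)\, \mathbb{E}\!\left[ \sum_{i=1}^{n} w(T_i)^{-4} Y_i^{-4} \,\Big|\, N_T = n \right].
\end{align}
Because $\{Y_i\}_{i\in\mathbb{N}}$ is independent of $\{N_t\}_{t\ge 0}$ (and hence of $\{T_i\}$ and of $N_T$), conditioning on $N_T=n$ leaves the $Y_i$ distributionally unaffected, so each inner expectation factors as
\begin{align}
    \mathbb{E}\!\left[ \sum_{i=1}^{n} w(T_i)^{-4} Y_i^{-4} \,\Big|\, N_T = n \right]
    = \mathbb{E}[Y_1^{-4}]\, \mathbb{E}\!\left[ \sum_{i=1}^{n} w(T_i)^{-4} \,\Big|\, N_T = n \right].
\end{align}

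Next I would invoke the order statistics property: conditional on $N_T=n$, the symmetric statistic $\sum_{i=1}^{n} w(T_i)^{-4}$ has the same law as $\sum_{i=1}^{n} w(U_i)^{-4}$ with $U_1,\ldots,U_n$ i.i.d.\ uniform on $[0,T]$. Therefore
\begin{align}
    \mathbb{E}\!\left[ \sum_{i=1}^{n} w(T_i)^{-4} \,\Big|\, N_T = n \right]
    = \frac{n}{T}\int_0^T w(t)^{-4}\, dt,
\end{align}
which is finite by Assumption \ref{assumption_grad}(a). Plugging back and using $\sum_{n=1}^\infty n\, \mathbb{P}(N_T=n) = \mathbb{E}[N_T]=\lambda T$ collapses the constants and yields the claimed identity. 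The presence of $\bm{1}_A$ is cosmetic, since the inner sum vanishes on $\{N_T=0\}$.

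There is no real obstacle here: the only subtle point is justifying the factoring of $Y_i^{-4}$ out of the conditional expectation and the order-statistics substitution, both of which are standard consequences of the independence structure and the well-known distribution of Poisson jump times; integrability of each term is secured by $\mathbb{E}[Y_1^{-4}]<\infty$ together with $w^{-1}\in L^4(0,T)$, and Tonelli handles the interchange of summation and expectation throughout.
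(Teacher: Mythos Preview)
Your proof is correct and follows essentially the same route as the paper: condition on $N_T$, factor out $\mathbb{E}[Y_1^{-4}]$ by independence, invoke the uniform order-statistics description of the jump times to reduce the conditional expectation to $\frac{n}{T}\int_0^T w(t)^{-4}\,dt$, and then sum using $\mathbb{E}[N_T]=\lambda T$. The only cosmetic difference is that you cite Tonelli where the paper cites monotone convergence, and you make explicit that $\bm{1}_A$ is redundant since the inner sum vanishes on $\{N_T=0\}$.
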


\begin{proof}
From the independence of $\{Y_i\}_{i\in \mathbb{N}}$ and $\{N_t\}_{t\ge 0}$, we obtain
\begin{align}
\begin{split}
    \mathbb{E}\left[ \bm{1}_A\left( \sum_{i=1}^{N_T} w(T_i)^{-4}Y_i^{-4} \right)\right]
    &= \mathbb{E}\left[\sum_{n=1}^{\infty}  \left( \sum_{i=1}^{n} w(T_i)^{-4}Y_i^{-4} \right)\bm{1}_{\left\{N_T = n\right\}}
    \right]\\
    &=  \sum_{n=1}^{\infty}\mathbb{E}\left[  \sum_{i=1}^{n} w(T_i)^{-4}Y_i^{-4} \middle| N_T = n\right]\mathbb{P}(N_T = n)\\
    & = \mathbb{E}\left[ Y_1^{-4}\right]\sum_{n=1}^{\infty}\mathbb{E}\left[  \sum_{i=1}^{n} w(T_i)^{-4}\middle| N_T = n\right]\mathbb{P}(N_T = n),
\end{split}\label{eq:leq_lemma3_1}
\end{align}
where the second equality follows from the monotone convergence theorem.
The distribution of jump times $(T_1,\ldots,T_n)$ conditioned on $\{N_T = n\}$ coincides with the distribution of the order statistics $(U_{(1)},\ldots,U_{(n)})$ of i.i.d. random variables $(U_1,\ldots,U_n)$ from the uniform distribution on $[0,T]$ \citep[Exercise 2.1.5]{daley2003introduction}.
Thus, it follows that
\begin{align}
    \mathbb{E}\left[  \sum_{i=1}^{n} w(T_i)^{-4}\middle| N_T = n\right]
     = \mathbb{E}\left[  \sum_{i=1}^{n} w(U_{(i)})^{-4}\right]
    &= \mathbb{E}\left[  \sum_{i=1}^{n} w(U_{i})^{-4}\right]\\
    &= n \mathbb{E}\left[w(U_1)^{-4}\right]\\
    &= \frac{n}{T} \int_{0}^{T} w(t)^{-4} dt.
\end{align}
Consequently, plugging the above equality into the equality \eqref{eq:leq_lemma3_1}, we have
\begin{align}
   \mathbb{E}\left[ \bm{1}_A\left( \sum_{i=1}^{N_T} w(T_i)^{-4}Y_i^{-4} \right)\right]
    = \mathbb{E}\left[ Y_1^{-4}\right]\mathbb{E}\left[N_T\right]\frac{1}{T}\int_{0}^{T} w(t)^{-4} dt 
    = \lambda\mathbb{E}\left[ Y_1^{-4}\right]\int_{0}^{T} w(t)^{-4} dt.
\end{align}
\end{proof}

The following theorem provides the unbiased estimator of the gradient of \eqref{eq:obj_function}.

\begin{theorem}\label{theorem_grad}
    Suppose Assumption \ref{assumption_grad} and let $(p,b) \in \mathcal{C}$. Then, we have
    \begin{align}\label{eq:grad}
        \nabla F(p,b) = \mathbb{E}\left[\begin{pmatrix}
        \bm{1}_A W_{p^{(1)}}  \bm{1}_{\{U_T(p,b) < 0\}}\\
        \vdots\\
        \bm{1}_A W_{p^{(m)}}  \bm{1}_{\{U_T(p,b) < 0\}}\\
        \bm{1}_A W_{b}  \bm{1}_{\{U_T(p,b) < 0\}}\\
        \end{pmatrix}\right],
    \end{align}
where the weights $W_{p^{(j)}}$ and $W_b$ are given by
\begin{align}
    &W_{p^{(j)}} = \frac{uS_T^{(j)}}{br\sum_{i=1}^{N_T} w(T_i)e^{rT_i}X_i}\left( \sum_{i=1}^{N_T} w^{\prime}(T_i) - \frac{\sum_{i=1}^{N_T} w(T_i)\{rw(T_i) + w^{\prime}(T_i)\}e^{rT_i} X_i}{\sum_{i=1}^{N_T} w(T_i)e^{rT_i}X_i}\right),\\
    & W_{b} = \frac{c_2T - \sum_{i=1}^{N_T} e^{rT_i}X_i}{ br\sum_{i=1}^{N_T} w(T_i)e^{rT_i}X_i}\left( \sum_{i=1}^{N_T} w^{\prime}(T_i) - \frac{\sum_{i=1}^{N_T} w(T_i)\{rw(T_i) + w^{\prime}(T_i)\}e^{rT_i} X_i}{\sum_{i=1}^{N_T} w(T_i)e^{rT_i}X_i}\right) - \frac{1}{b}.
\end{align}
\end{theorem}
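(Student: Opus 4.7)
The plan is to obtain \eqref{eq:grad} via a Malliavin integration-by-parts (IBP) argument on the Poisson space, following \citet{privault2004malliavin}, combined with a smoothing of the discontinuous indicator. Let $D_w$ denote the shift-type Malliavin derivative on a smooth functional of the jump times $(T_1,\ldots,T_{N_T})$, namely
\begin{align}
    D_w G \coloneqq \sum_{i=1}^{N_T} w(T_i)\,\partial_{T_i} G,
\end{align}
for a deterministic $w\in C_0^1([0,T])$, and let $\delta(w)\coloneqq -\sum_{i=1}^{N_T} w'(T_i)$ be the associated divergence. The duality $\mathbb{E}[\bm{1}_A D_w G] = \mathbb{E}[\bm{1}_A G\,\delta(w)]$ follows from conditioning on $\{N_T=n\}$, representing $(T_1,\ldots,T_n)$ as uniform order statistics on $[0,T]$ (as in the proof of Lemma \ref{lemma_3}), and integrating by parts in each $T_i$, with boundary terms vanishing because $w(0)=w(T)=0$. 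The extension to random directions is the standard product rule
\begin{align}
    \delta(\phi w) = \phi\,\delta(w) - D_w \phi.
\end{align}

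To transfer this calculus to the discontinuous indicator, I would first regularize by a smooth approximation $\psi_\epsilon \to \bm{1}_{(-\infty,0)}$ and set $F_\epsilon(p,b) \coloneqq \mathbb{E}[\bm{1}_A\psi_\epsilon(U_T(p,b))]$. For such $F_\epsilon$ one can interchange differentiation and expectation to get $\partial_\xi F_\epsilon = \mathbb{E}[\bm{1}_A \psi_\epsilon'(U_T) \partial_\xi U_T]$ for $\xi\in\{p^{(1)},\dots,p^{(m)},b\}$. Since $\partial_{T_i} U_T = -br\,e^{rT_i}X_i$, the key identity
\begin{align}
    D_w U_T = -br\sum_{i=1}^{N_T} w(T_i)e^{rT_i}X_i =: -br\,G
\end{align}
allows me to write $\partial_\xi U_T = \phi_\xi D_w U_T$ with $\phi_\xi = -\partial_\xi U_T/(br\,G)$, whence $\psi_\epsilon'(U_T)\partial_\xi U_T = \phi_\xi D_w\psi_\epsilon(U_T)$ by the chain rule for $D_w$. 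Combining with the Leibniz rule and the duality (using that $D_w\bm{1}_A=0$ since $\bm{1}_A$ depends only on $N_T$) gives
\begin{align}
    \partial_\xi F_\epsilon = \mathbb{E}\bigl[\bm{1}_A\psi_\epsilon(U_T)\bigl(\phi_\xi\,\delta(w) - D_w\phi_\xi\bigr)\bigr].
\end{align}
For $\xi=p^{(j)}$, $\partial_\xi U_T = uS_T^{(j)}$ is independent of the $T_i$, while for $\xi=b$ one has $D_w(\partial_b U_T) = -r\,G$; combined with the direct calculation
\begin{align}
    D_w G = \sum_{i=1}^{N_T} w(T_i)\bigl(rw(T_i)+w'(T_i)\bigr)e^{rT_i}X_i,
\end{align}
straightforward algebra identifies $\phi_\xi\delta(w)-D_w\phi_\xi$ with $W_{p^{(j)}}$ and $W_b$ in the statement.

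The main obstacle is to justify the passage $\epsilon\to 0$ on both sides, which reduces to showing $W_{p^{(j)}},W_b\in L^1$ and then applying dominated convergence. Negative moments of $G$ are the only nontrivial issue: by Cauchy-Schwarz on $\{N_T=n\}$ one has $n^2 \le G\sum_{i=1}^{N_T}(w(T_i)e^{rT_i}X_i)^{-1}$, so $\bm{1}_A/G$ is dominated by $e^{|r|T}\sum_{i=1}^{N_T}(w(T_i)X_i)^{-1}$. Raising to the fourth power via Lemma \ref{lemma_0} and then invoking Lemma \ref{lemma_3} with $Y_i=X_i$, together with $w^{-1}\in L^4(0,T)$ and $\mathbb{E}[X_1^{-4}]<\infty$ from Assumption \ref{assumption_grad}, yields $\mathbb{E}[\bm{1}_A/G^4]<\infty$. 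Combined with $\mathbb{E}[|S_T^{(j)}|^4]<\infty$ from Assumption \ref{assumption_grad}(b) and boundedness of $w,w'$ on $[0,T]$, Cauchy-Schwarz delivers the moment bounds on $W_{p^{(j)}}$ and $W_b$ needed to justify the limit and to conclude \eqref{eq:grad}.
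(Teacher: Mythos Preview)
Your proposal is correct and follows essentially the same route as the paper: both reduce the claim to the Malliavin integration-by-parts formula of \citet[Proposition~3.1]{privault2004malliavin} and then verify the needed integrability of $\bm{1}_A\,\partial_\xi U_T/D_w U_T$, the only difference being that the paper cites that proposition as a black box whereas you unpack the smoothing-and-duality argument behind it. One minor point on the moment bound: your Cauchy--Schwarz step followed by Lemma~\ref{lemma_0} produces an extra factor $N_T^4$ that Lemma~\ref{lemma_3} as stated does not absorb (though repeating its conditioning argument with the extra $n^4$ and using $\mathbb{E}[N_T^5]<\infty$ handles it); the paper avoids this by using the simpler positivity bound $G\ge \underline{\rho}\,w(T_1)X_1$ on $A$, so that $G^{-4}\le \underline{\rho}^{-4}\sum_{i=1}^{N_T} w(T_i)^{-4}X_i^{-4}$ and Lemma~\ref{lemma_3} applies directly.
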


\begin{proof}
To apply Proposition 3.1 in \citep{privault2004malliavin},
we will indicate the fulfillment of the following conditions:
\begin{align}
\bm{1}_A\frac{\partial_b U_T(p,b)}{D_w U_T(p,b)},~\bm{1}_A\frac{\partial_{p^{(j)}} U_T(p,b)}{D_w U_T(p,b)}\in L^4(\Omega),~j=1,\ldots,m,
\end{align}
where
\begin{equation}
    \partial_{p^{(j)}} U_T (p,b) = uS_T^{(j)},~ \partial_b U_T (p,b) = c_2T - \sum_{i=1}^{N_T} e^{rT_i}X_i,~ D_w U_T (p,b) = b r \sum_{i=1}^{N_T} w(T_i)e^{rT_i}X_i,
\end{equation}
and $D_w$ is the gradient operator in the Malliavin calculus for the Poisson process.
See \citep{privault2004malliavin} for details.
We first show that the condition for $b$ is satisfied.
Lemma \ref{lemma_0} shows that
\begin{align}
    \mathbb{E}\left[ \left|\bm{1}_A\frac{\partial_b U_T(p,b)}{D_w U_T(p,b)} \right|^4\right] &=  \mathbb{E}\left[ \bm{1}_A \left| \frac{c_2 T - \sum_{i=1}^{N_T} e^{rT_i}X_i}{br\sum_{i=1}^{N_T} w(T_i)e^{rT_i}X_i} \right|^4\right] \\
    & \le \mathbb{E}\left[ \frac{16\bm{1}_A c_2^4T^4}{b^4r^4\left|\sum_{i=1}^{N_T} w(T_i)e^{rT_i}X_i \right|^4}\right] +
    \mathbb{E}\left[ \frac{16 \bm{1}_A}{b^4r^4}\left( \frac{\sum_{i=1}^{N_T} e^{rT_i}X_i}{\sum_{i=1}^{N_T} w(T_i) e^{rT_i}X_i }\right)^4\right].
\end{align}
From Assumption \ref{assumption_grad} and Lemma \ref{lemma_3}, the first term of the above inequality can be evaluated as
\begin{align}
    \mathbb{E}\left[ \frac{16\bm{1}_A c_2^4T^4}{b^4r^4\left|\sum_{i=1}^{N_T} w(T_i)e^{rT_i}X_i \right|^4}\right]
    &= \frac{16c_2^4T^4}{b^4r^4 }\mathbb{E}\left[ \bm{1}_A\left( \sum_{i=1}^{N_T} w(T_i)e^{rT_i}X_i \right)^{-4}\right]\\
    &\le \frac{16c_2^4T^4}{b^4r^4 \underline{\rho}^4}\mathbb{E}\left[ \bm{1}_A\left( \sum_{i=1}^{N_T} w(T_i)X_i \right)^{-4}\right]\\
    &\le \frac{16c_2^4T^4}{b^4r^4 \underline{\rho}^4}\mathbb{E}\left[ \bm{1}_A  w(T_1)^{-4}X_1^{-4}\right]\\
    &\le \frac{16c_2^4T^4}{b^4r^4 \underline{\rho}^4}\mathbb{E}\left[ \bm{1}_A\left( \sum_{i=1}^{N_T} w(T_i)^{-4}X_i^{-4} \right)\right]\\
    &= \frac{16 \lambda c_2^4T^4}{b^4r^4 \underline{\rho}^4} \mathbb{E}\left[ X_1^{-4}\right] \int_{0}^{T} w(t)^{-4} dt < \infty,
\end{align}
where $\underline{\rho}\coloneqq \min\left\{1,e^{rT}\right\}$.
The second term is evaluated as
\begin{align}
    \mathbb{E}\left[ \frac{16 \bm{1}_A}{b^4r^4}\left( \frac{\sum_{i=1}^{N_T} e^{rT_i}X_i}{\sum_{i=1}^{N_T} w(T_i) e^{rT_i}X_i }\right)^4\right] &= \mathbb{E}\left[ \frac{16 \bm{1}_A}{b^4r^4}\left\{ \sum_{i=1}^{N_T} \left(\frac{ e^{rT_i}X_i}{\sum_{i=1}^{N_T} e^{rT_i}X_i } \right)w(T_i)\right\}^{-4}\right]\\
    &\le \mathbb{E}\left[ \frac{16 \bm{1}_A}{b^4r^4} \sum_{i=1}^{N_T}\left\{ \left(\frac{ e^{rT_i}X_i}{\sum_{i=1}^{N_T} e^{rT_i}X_i } \right)\frac{1
    }{w(T_i)^4} \right\}\right]\\
    & \le  \frac{16}{b^4r^4} \mathbb{E}\left[\bm{1}_A\left(\sum_{i=1}^{N_T} w(T_i)^{-4} \right)\right]\\
    & = \frac{16 \lambda  }{b^4r^4} \int_{0}^{T} w(t)^{-4}dt < \infty,
\end{align}
where the first inequality follows from the Jensen's inequality for the convex function $(\cdot)^{-4}$, and the last equality follows from Lemma \ref{lemma_3} with $Y_i \equiv 1$.
Therefore, the condition for $b$ is fulfilled.
To show the fulfillment of the condition for $p^{(j)}$, we have
\begin{align}
    \mathbb{E}\left[ \left|\bm{1}_A\frac{\partial_{p^{(j)}} U_T}{D_w U_T}\right|^4\right] &= \mathbb{E}\left[ \left|\bm{1}_A\frac{uS_T^{(j)}}{br\sum_{i=1}^{N_T} w(T_i)e^{rT_i}X_i}\right|^4\right] \\
    &\le \frac{\mathbb{E}\left[\left|uS_T^{(j)}\right|^4\right]}{b^4r^4\underline{\rho}^4}\mathbb{E}\left[\bm{1}_A \left(\sum_{i=1}^{N_T} w(T_i)X_i\right)^{-4}\right]\\
    &\le \frac{\mathbb{E}\left[\left|uS_T^{(j)}\right|^4\right]}{b^4r^4\underline{\rho}^4}\mathbb{E}\left[ \bm{1}_A\left( \sum_{i=1}^{N_T} w(T_i)^{-4}X_i^{-4} \right)\right]\\
    & \le \frac{\lambda u^4\mathbb{E}\left[\left|S_T^{(j)}\right|^4\right]}{b^4r^4\underline{\rho}^4}\mathbb{E}\left[ X_1^{-4}\right] \int_{0}^{T} w(t)^{-4} dt < \infty,
\end{align}
where the last inequality follows from Lemma \ref{lemma_3}.
Consequently, the desired result follows from Proposition 3.1 in \citep{privault2004malliavin}. 
\end{proof}

Since the random vector in \eqref{eq:grad} consists of simple operations on the Poisson process, claim sizes, and other deterministic parameters, it can be generated.
The random vector is also unbiased, and projection onto the feasible region $\mathcal{C}$ can be obtained explicitly.
Therefore, we apply the stochastic projected gradient method to the ruin probability minimization problem \eqref{eq:opt}.
The following proposition establishes the boundedness of the second moment for the stochastic gradient in \eqref{eq:grad}. This is one of the assumptions required in the convergence analysis in Section \ref{sec:sgd}.

\begin{proposition}
    Suppose that Assumption \ref{assumption_grad} holds.
    Then, there exists $M_1 > 0$ such that
    \begin{align}
        \mathbb{E}\left[\left\|\begin{pmatrix}
        \bm{1}_A W_{p^{(1)}}  \bm{1}_{\{U_T(p,b) < 0\}}\\
        \vdots\\
        \bm{1}_A W_{p^{(m)}}  \bm{1}_{\{U_T(p,b) < 0\}}\\
        \bm{1}_A W_{b}  \bm{1}_{\{U_T(p,b) < 0\}}\\
        \end{pmatrix}\right\|^2\right] \le M_1
    \end{align}
    for all $(p,b) \in \mathcal{C}$.
\end{proposition}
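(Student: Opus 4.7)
The plan is to dominate each squared component of the random vector by a deterministic $L^1$ bound that does not depend on $(p,b)\in\mathcal{C}$. Since $\bm{1}_{\{U_T(p,b)<0\}}\le 1$ and $\|\cdot\|^2$ sums the squared components, it suffices to prove that $\mathbb{E}[\bm{1}_A W_{p^{(j)}}^2]$ (for each $j$) and $\mathbb{E}[\bm{1}_A W_b^2]$ admit such a uniform bound.

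The first key observation is that all weights share a common factor
$Q \coloneqq \sum_{i=1}^{N_T} w'(T_i) - \frac{\sum_{i=1}^{N_T} w(T_i)(rw(T_i)+w'(T_i))e^{rT_i}X_i}{\sum_{i=1}^{N_T} w(T_i)e^{rT_i}X_i}$,
so that $W_{p^{(j)}} = R_j\, Q$ and $W_b = R_b\, Q - 1/b$, where $R_j := \partial_{p^{(j)}}U_T/D_w U_T$ and $R_b := \partial_b U_T/D_w U_T$ are precisely the ratios whose $L^4(\Omega)$ norms were bounded explicitly in the proof of Theorem \ref{theorem_grad}. Inspecting those bounds shows that the $(p,b)$-dependence enters only through a factor $1/b^4$, which on $\mathcal{C}$ is dominated by $1/\underline b^4$. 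Hence $\sup_{(p,b)\in\mathcal{C}} \mathbb{E}[\bm{1}_A R_j^4]$ and $\sup_{(p,b)\in\mathcal{C}} \mathbb{E}[\bm{1}_A R_b^4]$ are finite, and by Cauchy--Schwarz it only remains to verify $\mathbb{E}[Q^4]<\infty$.

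The bound on $Q$ is the only step requiring fresh work, and it is essentially pathwise. Since $w\in C_0^1([0,T])$, both $w$ and $w'$ are continuous on the compact interval $[0,T]$ and hence bounded; set $M_w := \|w\|_\infty$ and $M_{w'}:=\|w'\|_\infty$. The first term of $Q$ satisfies $|\sum_{i=1}^{N_T}w'(T_i)| \le M_{w'} N_T$. Splitting the remaining quotient as $r\cdot \frac{\sum w(T_i)^2 e^{rT_i}X_i}{\sum w(T_i) e^{rT_i}X_i}$ plus $\frac{\sum w(T_i)w'(T_i)e^{rT_i}X_i}{\sum w(T_i)e^{rT_i}X_i}$, each of the two quotients is a weighted average (with nonnegative weights $w(T_i)e^{rT_i}X_i$ that sum to the denominator) of the $w(T_i)$ and of the $w'(T_i)$ respectively, so they are bounded in absolute value by $M_w$ and $M_{w'}$. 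This yields the deterministic pathwise bound $|Q| \le M_{w'}N_T + |r|M_w + M_{w'}$, and since $N_T$ is Poisson with intensity $\lambda T$ all its moments are finite, so $\mathbb{E}[Q^4]<\infty$ with a bound independent of $(p,b)$.

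Assembling the pieces through Cauchy--Schwarz gives $\mathbb{E}[\bm{1}_A W_{p^{(j)}}^2]\le \sqrt{\mathbb{E}[\bm{1}_A R_j^4]}\,\sqrt{\mathbb{E}[Q^4]}$ and an analogous bound for $W_b$ together with the contribution $1/b^2\le 1/\underline b^{2}$ from the trailing $-1/b$ term; summing over the $m+1$ coordinates produces the desired constant $M_1$. No step presents a serious obstacle; the mildly nontrivial point is the weighted-average observation that keeps the fractional terms of $Q$ bounded independently of the (possibly large or small) claim sizes, without which a naive estimate would force additional integrability hypotheses on $X_1$.
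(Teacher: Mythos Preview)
Your proof is correct, but it takes a different and more explicit route than the paper. The paper simply factors $W_{p^{(j)}}=Q^{(j)}/b$ and $W_b=R/b$ with $(p,b)$-free random variables $Q^{(j)},R$, and then observes that the square integrability of $\bm{1}_A W_{p^{(j)}}$ and $\bm{1}_A W_b$ is already guaranteed by the abstract Malliavin framework (the weights arise as a Skorohod divergence, hence lie in $L^2$); multiplying by the bounded scalar $b$ transfers this to $Q^{(j)},R$, and the uniform bound then comes from $1/b^2\le 1/\underline b^{\,2}$.

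Your argument instead works entirely by hand: you recycle the explicit $L^4$ bounds on the ratios $R_j,R_b$ from the proof of Theorem~\ref{theorem_grad}, and supply the missing ingredient $\mathbb{E}[\bm 1_A Q^4]<\infty$ via the neat weighted-average observation that keeps the fractional terms of $Q$ pathwise bounded by $\|w\|_\infty$ and $\|w'\|_\infty$. This is longer but fully self-contained and does not invoke any Hilbert-space property of the divergence operator; it also makes the constant $M_1$ in principle computable. The paper's proof is shorter precisely because it outsources the $L^2$ estimate to the functional-analytic machinery of \citep{privault2004malliavin}. One small cosmetic point: your Cauchy--Schwarz step should read $\mathbb{E}[\bm 1_A Q^4]$ rather than $\mathbb{E}[Q^4]$, since $Q$ is only defined on $A$.
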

\begin{proof}
By setting
\begin{align}
    &Q^{(j)} \coloneqq \frac{uS_T^{(j)}}{r\sum_{i=1}^{N_T} w(T_i)e^{rT_i}X_i}\left( \sum_{i=1}^{N_T} w^{\prime}(T_i) - \frac{\sum_{i=1}^{N_T} w(T_i)\{rw(T_i) + w^{\prime}(T_i)\}e^{rT_i} X_i}{\sum_{i=1}^{N_T} w(T_i)e^{rT_i}X_i}\right),~j=1,\ldots,m, \label{eq:Q}\\
    &R \coloneqq \frac{c_2T - \sum_{i=1}^{N_T} e^{rT_i}X_i}{ r\sum_{i=1}^{N_T} w(T_i)e^{rT_i}X_i}\left( \sum_{i=1}^{N_T} w^{\prime}(T_i) - \frac{\sum_{i=1}^{N_T} w(T_i)\{rw(T_i) + w^{\prime}(T_i)\}e^{rT_i} X_i}{\sum_{i=1}^{N_T} w(T_i)e^{rT_i}X_i}\right) - 1,\label{eq:R}
\end{align}
$W_{p^{(j)}}$ and $W_b$ can be expressed as $W_{p^{(j)}}=\frac{Q^{(j)}}{b}$ and $W_b = \frac{R}{b}$, respectively.
Accordingly, the square integrability of $\bm{1}_AW_{p^{(1)}},\ldots,\bm{1}_AW_{p^{(m)}},\bm{1}_AW_b$ \footnote{This comes from the definition of the adjoint operator on the Hilbert space. See \citep[proof of Proposition 3.1]{privault2004malliavin} for details.} implies that of $\bm{1}_AQ^{(1)},\ldots,\bm{1}_AQ^{(m)},\bm{1}_AR$.
Thus, we have
\begin{align}
    \mathbb{E}\left[\left\|\begin{pmatrix}
    \bm{1}_A W_{p^{(1)}}  \bm{1}_{\{U_T(p,b) < 0\}}\\
    \vdots\\
    \bm{1}_A W_{p^{(m)}}  \bm{1}_{\{U_T(p,b) < 0\}}\\
    \bm{1}_A W_{b}  \bm{1}_{\{U_T(p,b) < 0\}}\\
    \end{pmatrix}\right\|^2\right] &\le \mathbb{E}\left[ \bm{1}_A W_{p^{(1)}}^2\right]+\cdots+\mathbb{E}\left[ \bm{1}_A W_{p^{(m)}}^2\right]+\mathbb{E}\left[ \bm{1}_A W_b^2\right] \\
    &\le \underbrace{\frac{1}{\underline{b}}\left(\mathbb{E}\left[ \bm{1}_A {Q^{(1)}}^2\right]+\cdots+\mathbb{E}\left[ \bm{1}_A {Q^{(m)}}^2\right]+\mathbb{E}\left[ \bm{1}_A R^2\right]\right)}_{\eqqcolon M_1}
    < \infty.
\end{align}
This completes the proof.
\end{proof}

Convergence analysis for gradient-based methods often relies on Lipschitz continuity and more generally on H\"older continuity.
To establish the H\"older continuity of the gradient \eqref{eq:grad} on the feasible region $\mathcal{C}$, we make the following assumption.

\begin{assumption}\label{integral_assump}
There exists $q>0$ such that $\mathbb{E}\left[ \left|S_T^{(j)}\right|^q \right] < \infty,~j=1,\ldots,m,~\mathbb{E}\left[ \left|X_1\right|^q \right] < \infty$.
\end{assumption}

We first show the following Avikainen's estimate-type inequality, which is a slight extension of \citep{avikainen2009irregular} for the conditional density function.

\begin{lemma}\label{avikainen}
Let $(p_1,b_1),(p_2,b_2)\in \mathcal{C}$ and assume that the conditional density $f_{U_T\left(p_1,b_1\right)|A}$ exists.
For any $\alpha,~\beta > 0$, we have
\begin{align}
    &\mathbb{E}\left[ \left| \bm{1}_{\left\{U_T\left(p_1,b_1\right) < 0\right\}} - \bm{1}_{\left\{U_T\left(p_2,b_2\right) < 0\right\}}\right|^{\alpha}\right]\\
    &\le 2^{\frac{2\beta+1}{1+\beta}}\left(1-e^{-\lambda T}\right)^{\frac{\beta}{1+\beta}}\| f_{U_T\left(p_1,b_1\right)|A} \|_{\infty}^{\frac{\beta}{1+\beta}} \mathbb{E}\left[ \left| U_T\left(p_1,b_1\right) - U_T\left(p_2,b_2\right) \right|^\beta \right]^{\frac{1}{1+\beta}}.
\end{align}
\end{lemma}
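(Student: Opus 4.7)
The proof will be a conditional variant of the classical Avikainen estimate, with the Poisson event $A$ playing the role that makes the unconditional density well defined. The plan is to reduce to the event $A$, then run the standard two-set covering argument on the event where the indicators disagree, then undo the conditioning at the end.

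First I would eliminate the exponent and the set $A^c$. Since the integrand on the left takes values in $\{0,1\}$, we may replace $\alpha$ by $1$. Moreover, on $A^c$ we have $N_T=0$, so for any $(p,b)\in\mathcal{C}$,
\begin{align}
U_T(p,b) \;=\; up^\top S_T + (c_1-(1-b)c_2)T \;\ge\; (c_1-(1-\underline b)c_2)T \;\ge\; 0,
\end{align}
using $p^{(j)}\ge 0$, $S_T^{(j)}>0$, and the standing assumption $\underline b\ge 1-c_1/c_2$. Hence both indicators vanish on $A^c$, and the left-hand side equals $\mathbb{E}\bigl[\bm{1}_A\,|\bm{1}_{\{U_T(p_1,b_1)<0\}}-\bm{1}_{\{U_T(p_2,b_2)<0\}}|\bigr] = \mathbb{P}(A)\,\mathbb{E}\bigl[|\bm{1}_{\{U_T(p_1,b_1)<0\}}-\bm{1}_{\{U_T(p_2,b_2)<0\}}|\bigm|A\bigr]$, with $\mathbb{P}(A)=1-e^{-\lambda T}$.

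Next, the classical Avikainen argument: for any $\epsilon>0$, the event where the two indicators differ is contained in $\{|U_T(p_1,b_1)-U_T(p_2,b_2)|>\epsilon\}\cup\{U_T(p_1,b_1)\in[-\epsilon,\epsilon]\}$, because if the indicators disagree then one of $U_T(p_1,b_1),U_T(p_2,b_2)$ is negative and the other nonnegative, so either they are $\epsilon$-apart or $U_T(p_1,b_1)$ sits in the band $[-\epsilon,\epsilon]$. Conditioning on $A$ and applying Markov's inequality to the first piece and the boundedness of $f_{U_T(p_1,b_1)|A}$ to the second piece yields
\begin{align}
\mathbb{P}\bigl(\bm{1}_{\{U_T(p_1,b_1)<0\}}\neq\bm{1}_{\{U_T(p_2,b_2)<0\}}\bigm|A\bigr)
\;\le\; \epsilon^{-\beta}\,\mathbb{E}\bigl[|U_T(p_1,b_1)-U_T(p_2,b_2)|^\beta\bigm|A\bigr] + 2\epsilon\,\|f_{U_T(p_1,b_1)|A}\|_\infty.
\end{align}

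Then I would optimize $\epsilon$. Choosing $\epsilon^{1+\beta} = \mathbb{E}[|U_T(p_1,b_1)-U_T(p_2,b_2)|^\beta\mid A]/(2\|f_{U_T(p_1,b_1)|A}\|_\infty)$ balances the two terms up to a factor, and a direct computation collapses the right-hand side to $2^{(2\beta+1)/(1+\beta)}\,\|f_{U_T(p_1,b_1)|A}\|_\infty^{\beta/(1+\beta)}\,\mathbb{E}\bigl[|U_T(p_1,b_1)-U_T(p_2,b_2)|^\beta\bigm|A\bigr]^{1/(1+\beta)}$, which matches the stated constant. Finally I would multiply by $\mathbb{P}(A)$ and use the trivial bound $\mathbb{E}[\,\cdot\mid A\,]\le \mathbb{E}[\,\cdot\,]/\mathbb{P}(A)$ on the conditional $\beta$-moment; the surviving factor of $\mathbb{P}(A)$ has exponent $1-1/(1+\beta)=\beta/(1+\beta)$, giving the $(1-e^{-\lambda T})^{\beta/(1+\beta)}$ prefactor. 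There is no real obstacle here — the only delicate point is keeping the constant honest, and the choice of the factor $1/(2\|f\|_\infty)$ inside $\epsilon^{1+\beta}$ is exactly what produces $2^{(2\beta+1)/(1+\beta)}$ rather than a cruder constant.
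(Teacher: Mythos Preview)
Your proposal is correct and follows essentially the same argument as the paper: reduce to the event $A$ (you do this by showing both indicators vanish on $A^c$, the paper by writing $\mathbb{P}(A_1\cup A_2)=\mathbb{P}((A_1\cup A_2)\cap A)$), apply the standard Avikainen covering $\{\text{indicators differ}\}\subset\{|U_T(p_1,b_1)|\le\varepsilon\}\cup\{|U_T(p_1,b_1)-U_T(p_2,b_2)|\ge\varepsilon\}$, bound the two pieces via the conditional density and Markov, and optimize in $\varepsilon$. The only cosmetic difference is that you work conditionally on $A$ and undo this at the end via $\mathbb{E}[\,\cdot\mid A]\le\mathbb{E}[\,\cdot\,]/\mathbb{P}(A)$, whereas the paper keeps the Markov term unconditional from the start and absorbs $\mathbb{P}(A)$ into the choice of $\varepsilon$; both routes yield the identical constant.
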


\begin{proof}
Let $A_1 \coloneqq \left\{U_T\left(p_1,b_1\right) < 0, U_T\left(p_2,b_2\right) \geq 0\right\},A_2 \coloneqq \left\{U_T\left(p_1,b_1\right) \geq 0, U_T\left(p_2,b_2\right) < 0\right\}$. Then, it follows that
\begin{align}\label{eq:lem_avi_1}
\begin{split}
    \mathbb{E}\left[ \left| \bm{1}_{\left\{U_T\left(p_1,b_1\right) < 0\right\}} - \bm{1}_{\left\{U_T\left(p_2,b_2\right) < 0\right\}}\right|^{\alpha}\right] &=\mathbb{E}\left[ \left| \bm{1}_{\left\{U_T\left(p_1,b_1\right) < 0\right\}} - \bm{1}_{\left\{U_T\left(p_2,b_2\right) < 0\right\}}\right|\right]\\
    &= \mathbb{P} (A_1 \cup A_2)\\
    &= \mathbb{P} ((A_1 \cup A_2) \cap A).
\end{split}
\end{align}
For any $\varepsilon > 0$, it holds that $A_1 \cup A_2 \subset \left\{\left|U_T\left(p_1,b_1\right) \right|\le \varepsilon\right\} \cup\left\{\left|U_T\left(p_1,b_1\right)-U_T\left(p_2,b_2\right)\right| \geq \varepsilon \right\}$. Thus, the equality \eqref{eq:lem_avi_1} can be evaluated as
\begin{align}\label{eq:lem_avi_2}
\begin{split}
    &\mathbb{E}\left[ | \bm{1}_{\left\{U_T\left(p_1,b_1\right) < 0\right\}} - \bm{1}_{\left\{U_T\left(p_2,b_2\right) < 0\right\}}|^{\alpha}\right]\\
    &\le \mathbb{P}\left( \left|U_T\left(p_1,b_1\right)\right| \le  \varepsilon \cap A\right)+\mathbb{P}\left(\left|U_T\left(p_1,b_1\right)-U_T\left(p_2,b_2\right)\right| \geq \varepsilon \cap A\right)\\
    &= \mathbb{P}\left( \left|U_T\left(p_1,b_1\right)\right| \le  \varepsilon \middle| A\right)\mathbb{P}(A)+\mathbb{P}\left(\left|U_T\left(p_1,b_1\right)-U_T\left(p_2,b_2\right)\right| \geq \varepsilon \cap A\right)\\
    &\le \mathbb{P}(A)\int_{-\varepsilon}^{\varepsilon} f_{U_T\left(p_1,b_1\right) | A}(x) dx + \mathbb{P}\left(\left|U_T\left(p_1,b_1\right)-U_T\left(p_2,b_2\right)\right| \geq \varepsilon\right)\\
    &\le 2\varepsilon\mathbb{P}(A)\|f_{U_T\left(p_1,b_1\right) | A}\|_\infty + \mathbb{P}\left(\left|U_T\left(p_1,b_1\right)-U_T\left(p_2,b_2\right)\right| \geq \varepsilon\right)\\
    & \le 2\varepsilon\mathbb{P}(A)\|f_{U_T\left(p_1,b_1\right) | A}\|_\infty + \frac{1}{\varepsilon^\beta}\mathbb{E}\left[ \left| U_T\left(p_1,b_1\right) - U_T\left(p_2,b_2\right)\right|^\beta \right],
\end{split}
\end{align}
where the last inequality follows from the Markov's inequality.
Setting
\begin{equation}
    \varepsilon = \left( 2\left\|f_{U_T\left(p_1,b_1\right)|A}\right\|_{\infty} \mathbb{P}(A) \right)^{-\frac{1}{1+\beta}}\mathbb{E}\left[ \left| U_T\left(p_1,b_1\right) - U_T\left(p_2,b_2\right)\right|^\beta \right]^{\frac{1}{1+\beta}}
\end{equation}
yields
\begin{align}
    &\mathbb{E}\left[ \left| \bm{1}_{\left\{U_T\left(p_1,b_1\right) < 0\right\}} - \bm{1}_{\left\{U_T\left(p_2,b_2\right) < 0\right\}}\right|^{\alpha}\right]\\
    &\le 2^{\frac{2\beta+1}{1+\beta}}\mathbb{P}(A)^{\frac{\beta}{1+\beta}}\left\| f_{U_T\left(p_1,b_1\right)|A} \right\|_{\infty}^{\frac{\beta}{1+\beta}} \mathbb{E}\left[ \left| U_T\left(p_1,b_1\right) - U_T\left(p_2,b_2\right)\right|^\beta \right]^{\frac{1}{1+\beta}}\\
    &= 2^{\frac{2\beta+1}{1+\beta}}\left(1-e^{-\lambda T}\right)^{\frac{\beta}{1+\beta}}\left\| f_{U_T\left(p_1,b_1\right)|A} \right\|_{\infty}^{\frac{\beta}{1+\beta}} \mathbb{E}\left[ \left| U_T\left(p_1,b_1\right) - U_T\left(p_2,b_2\right)\right|^\beta \right]^{\frac{1}{1+\beta}},
\end{align}
which is the desired result.
\end{proof}

Next, the following lemma implies that the conditional density function of the surplus $U_T(p,b)$ is bounded on the feasible region $\mathcal{C}$.

\begin{lemma}\label{dense_bound}
    Suppose that Assumption \ref{assumption_grad} holds.
    Then, there exists $M_2>0$ such that
    \begin{equation}
        \|f_{U_T(p,b)|A}\|_\infty \le M_2
    \end{equation}
    for all $(p,b)\in\mathcal{C}$.
\end{lemma}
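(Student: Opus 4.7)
The plan is to realize the conditional density $f_{U_T(p,b)|A}$ as an expectation involving a Malliavin weight analogous to those appearing in Theorem \ref{theorem_grad}, and then to bound that expectation uniformly over $\mathcal{C}$ by the same estimation machinery already developed in the proof of Theorem \ref{theorem_grad}.

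The idea is to treat the threshold as a parameter. Define $\tilde U_T(p,b,x)\coloneqq U_T(p,b)-x$, so that $\{U_T(p,b)<x\}=\{\tilde U_T(p,b,x)<0\}$, with $\partial_x \tilde U_T=-1$ and $D_w \tilde U_T=D_w U_T(p,b)=br\sum_{i=1}^{N_T}w(T_i)e^{rT_i}X_i$. Because $b\ge\underline b>0$ on $\mathcal{C}$, the ratio $\bm{1}_A \partial_x\tilde U_T/D_w\tilde U_T=-\bm{1}_A/(br\sum w(T_i)e^{rT_i}X_i)$ lies in $L^4(\Omega)$ by the very same chain of inequalities (Lemma~\ref{lemma_0} followed by Lemma~\ref{lemma_3} with $Y_i=X_i$) used for the $W_b$ bound in the proof of Theorem \ref{theorem_grad}. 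Consequently, Proposition~3.1 of \citet{privault2004malliavin}, applied with parameter $x$, yields differentiability of $x\mapsto \mathbb{E}[\bm{1}_A\bm{1}_{\{U_T(p,b)<x\}}]$ together with the representation
\begin{align}
f_{U_T(p,b)|A}(x)=\frac{1}{1-e^{-\lambda T}}\,\mathbb{E}\!\left[\bm{1}_A\bm{1}_{\{U_T(p,b)<x\}}\,\tilde W_x(p,b)\right],
\end{align}
where $\tilde W_x(p,b)$ has the same structural form as $W_b$ but with $\partial_b U_T=c_2T-\sum e^{rT_i}X_i$ replaced by $-1$ and without the additive $-1/b$ term (since $D_w U_T$ does not depend on $x$); explicitly,
\begin{align}
\tilde W_x(p,b)=\frac{-1}{br\sum_{i=1}^{N_T}w(T_i)e^{rT_i}X_i}\!\left(\sum_{i=1}^{N_T}w'(T_i)-\frac{\sum_{i=1}^{N_T}w(T_i)\{rw(T_i)+w'(T_i)\}e^{rT_i}X_i}{\sum_{i=1}^{N_T}w(T_i)e^{rT_i}X_i}\right).
\end{align}

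From this representation, $|f_{U_T(p,b)|A}(x)|\le(1-e^{-\lambda T})^{-1}\mathbb{E}[\bm{1}_A|\tilde W_x(p,b)|]$ pointwise in $x$, so it remains to show that the right-hand side is bounded by a constant $M_2$ independent of $(p,b)\in\mathcal{C}$. I would expand $\tilde W_x(p,b)$, pull out the factor $1/b\le 1/\underline b$, bound $e^{rT_i}$ from below by $\underline\rho\coloneqq\min\{1,e^{rT}\}$ inside the denominators, and apply Cauchy--Schwarz to split the expectation into moments of $\bm{1}_A\sum w'(T_i)$ and of $\bm{1}_A(\sum w(T_i)e^{rT_i}X_i)^{-k}$ for $k\le 2$; each of these is then handled exactly as in the proof of Theorem \ref{theorem_grad} by invoking Lemma \ref{lemma_3} (with $Y_i=X_i$ or $Y_i\equiv 1$) and Assumption~\ref{assumption_grad}. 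The resulting bound depends on $(p,b)$ only through $1/\underline b$ and deterministic constants, which furnishes the sought-after $M_2$.

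The principal obstacle is the first step: justifying that Privault's IBP formula applies cleanly to the artificial shift parameter $x$ and produces the claimed weight $\tilde W_x(p,b)$ with the stated structure. Once the integrability $\bm{1}_A\partial_x\tilde U_T/D_w\tilde U_T\in L^4(\Omega)$ is checked uniformly in $\mathcal{C}$ (which is a direct transcription of the $W_b$ computation with $\partial_b U_T$ replaced by $-1$), the remainder of the argument is a routine repetition of the moment estimates already performed, so the uniform bound follows at once.
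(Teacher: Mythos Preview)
Your proposal is correct and follows essentially the same route as the paper: introduce the shifted process $V_T^y(p,b)=U_T(p,b)-y$, verify $\bm{1}_A\,\partial_y V_T^y/D_w V_T^y\in L^4(\Omega)$ via Lemma~\ref{lemma_3}, apply Proposition~3.1 of \citet{privault2004malliavin} to obtain the density as $\mathbb{E}[\bm{1}_A W_y\bm{1}_{\{V_T^y\le 0\}}]/(1-e^{-\lambda T})$ with exactly your weight $\tilde W_x$, and then pull out the factor $1/b\le 1/\underline b$ to obtain a $(p,b)$-independent bound. The only cosmetic difference is that the paper invokes the square integrability of $\bm{1}_A W_y$ (guaranteed by the Privault proposition) directly for the final bound, whereas you sketch a more hands-on Cauchy--Schwarz/Lemma~\ref{lemma_3} estimate; both lead to the same $M_2$.
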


\begin{proof}
The density of $U_T(p,b)$ conditioned on $A$ can be expressed as
\begin{align}\label{eq:dens_func}
\begin{split}
    f_{U_T(p,b)|A}(y) = \frac{\partial}{\partial y}\mathbb{P}(U_T(p,b)\le y|A)
    &= \frac{1}{\left(1-e^{-\lambda T}\right)}\frac{\partial}{\partial y}\mathbb{P}(U_T(p,b)\le y\cap A)\\
    &= \frac{1}{\left(1-e^{-\lambda T}\right)}\frac{\partial}{\partial y}\mathbb{E}\left[ \bm{1}_A \bm{1}_{\left\{ U_T(p,b) \le y\right\}} \right]\\
    &= \frac{1}{\left(1-e^{-\lambda T}\right)}\frac{\partial}{\partial y}\mathbb{E}\left[ \bm{1}_A \bm{1}_{\left\{ V^y_T(p,b) \le 0\right\}} \right],
\end{split}
\end{align}
where $V^y_T(p,b) \coloneqq U_T(p,b)-y$.
We will indicate the fulfillment of the condition $\bm{1}_A\frac{\partial_y V^y_T(p,b)}{D_w V^y_T(p,b)} \in L^4(\Omega)$ to apply Proposition 3.1 in \citep{privault2004malliavin}.
It follows from $\partial_y V^y_T(p,b)=-1$ and $D_w V^y_T(p,b) = br \sum_{i=1}^{N_T} w(T_i)e^{rT_i}X_i$ that
\begin{align}
    \mathbb{E}\left[ \left|\bm{1}_A\frac{\partial_{y} V^y_T(p,b)}{D_wV^y_T(p,b)}\right|^4\right] &= \frac{1}{b^4 r^4}\mathbb{E}\left[\bm{1}_A \left(\sum_{i=1}^{N_T} w(T_i)e^{rT_i}X_i\right)^{-4}\right] \\
    &\le  \frac{1}{b^4 r^4 \underline{\rho}^4}\mathbb{E}\left[\bm{1}_A \left(\sum_{i=1}^{N_T} w(T_i)X_i\right)^{-4}\right] \\
    &\le \frac{1}{b^4r^4\underline{\rho}^4}\mathbb{E}\left[ \bm{1}_A\left( \sum_{i=1}^{N_T} w(T_i)^{-4}X_i^{-4} \right)\right]\\
    & \le \frac{\lambda}{b^4r^4\underline{\rho}^4}\mathbb{E}\left[ X_1^{-4}\right] \int_{0}^{T} w(t)^{-4} dt < \infty,
\end{align}
where the last inequality follows from Lemma \ref{lemma_3}.
Thus, we obtain from Proposition 3.1 in \citep{privault2004malliavin} that
\begin{align}
    \frac{\partial}{\partial y}\mathbb{E}\left[ \bm{1}_A \bm{1}_{\left\{ V^y_T(p,b) \le 0\right\}} \right]
    = \mathbb{E}\left[\bm{1}_{A}W_y\bm{1}_{\{V^y_T(p,b)\le 0\}} \right],
\end{align}
where $W_y$ is given by
\begin{align}
    W_y \coloneqq \frac{-1}{br \sum_{i=1}^{N_T} w(T_i)e^{rT_i}X_i}\left(\sum_{i=1}^{N_T} w^{\prime}(T_i) - \frac{\sum_{i=1}^{N_T} w(T_i)\{rw(T_i) + w^{\prime}(T_i)\}e^{rT_i} X_i}{\sum_{i=1}^{N_T} w(T_i)e^{rT_i}X_i} \right).
\end{align}
Combining the equality \eqref{eq:dens_func} with the above, we obtain from the square integrability of $\bm{1}_A W_y$ that
\begin{align}
     &\left|f_{U_T(p,b)|A}(y)\right|\\
     &= \frac{1}{\left(1-e^{-\lambda T}\right)} \left|\mathbb{E}\left[ \bm{1}_A W_y\bm{1}_{\left\{ V^y_T(p,b) \le 0\right\}} \right]\right|\\
     &\le \frac{1}{\left(1-e^{-\lambda T}\right)} \left|\mathbb{E}\left[ \frac{\bm{1}_A\bm{1}_{\left\{ V^y_T(p,b) \le 0\right\}}}{br \sum_{i=1}^{N_T} w(T_i)e^{rT_i}X_i}\left(\sum_{i=1}^{N_T} w^{\prime}(T_i) - \frac{\sum_{i=1}^{N_T} w(T_i)\{rw(T_i) + w^{\prime}(T_i)\}e^{rT_i} X_i}{ \sum_{i=1}^{N_T} w(T_i)e^{rT_i}X_i }\right) \right]\right|\\
     &\le \frac{1}{\underline{b}\left(1-e^{-\lambda T}\right)}\mathbb{E}\left[ \left|\frac{\bm{1}_A}{r\sum_{i=1}^{N_T} w(T_i)e^{rT_i}X_i}\left(\sum_{i=1}^{N_T} w^{\prime}(T_i) - \frac{\sum_{i=1}^{N_T} w(T_i)\{rw(T_i) + w^{\prime}(T_i)\}e^{rT_i} X_i}{ \sum_{i=1}^{N_T} w(T_i)e^{rT_i}X_i }\right)\right| \right]\\
     &\eqqcolon M_2 < \infty,
\end{align}
which implies the desired result.
\end{proof}

Finally, we prove the following inequality that is useful for providing H\"older continuity of the gradient.

\begin{lemma}\label{lemma_3_4}
Suppose that Assumption \ref{integral_assump} holds. Then, for any $(p_1,b_1),(p_2,b_2)\in \mathcal{C}$, we have
\begin{align}
    \mathbb{E}\left[ \left| U_T\left(p_1,b_1\right) - U_T\left(p_2,b_2\right)\right|^q\right]^{\frac{1}{1+q}} \le M_3 \|(p_1,b_1)-(p_2,b_2)\|^{\frac{q}{1+q}},
\end{align}
where $M_3$ is given by
\begin{align}
    M_3\coloneqq (m+1)\max\left\{\mathbb{E}\left[\left|uS_T^{(1)}\right|^q\right],\ldots,\mathbb{E}\left[\left|uS_T^{(m)}\right|^q\right],\mathbb{E}\left[\left| c_2T- \sum_{i=1}^{N_T} e^{rT_i}X_i\right|^q\right] \right\}^{\frac{1}{1+q}}.
\end{align}
\end{lemma}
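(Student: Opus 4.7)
My plan is to directly expand the difference $U_T(p_1,b_1)-U_T(p_2,b_2)$ using the definition of the surplus process, bound its $q$-th moment by a power of $\|(p_1,b_1)-(p_2,b_2)\|$ via the inequality in Lemma \ref{lemma_0}, and then raise the result to the $1/(1+q)$ power to produce the stated H\"older-type bound.

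First I would substitute the definition \eqref{eq:surplus_model} into $U_T(p_1,b_1)-U_T(p_2,b_2)$. The deterministic terms $c_1T$ cancel, so the difference reduces to
\begin{align}
U_T(p_1,b_1)-U_T(p_2,b_2) = u\sum_{j=1}^{m}(p_1^{(j)}-p_2^{(j)})S_T^{(j)} + (b_1-b_2)\Bigl(c_2T-\sum_{i=1}^{N_T}e^{rT_i}X_i\Bigr).
\end{align}
This is a sum of $m+1$ terms, so by Lemma \ref{lemma_0} with $n=m+1$ and $p=q$, the $q$-th power is bounded by $(m+1)^q$ times the sum of the $q$-th powers of the individual terms. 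Taking expectations, pulling out the deterministic factors $|p_1^{(j)}-p_2^{(j)}|^q$ and $|b_1-b_2|^q$, and bounding every expectation by the maximum appearing in the definition of $M_3$ (which is finite by Assumption \ref{integral_assump}), I obtain
\begin{align}
\mathbb{E}\left[|U_T(p_1,b_1)-U_T(p_2,b_2)|^q\right] \le (m+1)^q \frac{M_3^{1+q}}{(m+1)^{1+q}}\Bigl(\sum_{j=1}^{m}|p_1^{(j)}-p_2^{(j)}|^q+|b_1-b_2|^q\Bigr).
\end{align}

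Next I would use the trivial coordinatewise bounds $|p_1^{(j)}-p_2^{(j)}|\le\|(p_1,b_1)-(p_2,b_2)\|$ and $|b_1-b_2|\le\|(p_1,b_1)-(p_2,b_2)\|$ to replace the bracketed sum by $(m+1)\|(p_1,b_1)-(p_2,b_2)\|^q$. After simplification this yields
\begin{align}
\mathbb{E}\left[|U_T(p_1,b_1)-U_T(p_2,b_2)|^q\right] \le M_3^{1+q}\|(p_1,b_1)-(p_2,b_2)\|^q,
\end{align}
and raising both sides to the $1/(1+q)$ power produces the claim.

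There is no real obstacle; the only thing to watch is the exponent bookkeeping, in particular verifying that the prefactor produced by Lemma \ref{lemma_0} combined with the $\ell_\infty$-to-$\ell_2$ comparison collapses to exactly $M_3^{1+q}$ so that the $1/(1+q)$ root gives the constant $M_3$ in the statement.
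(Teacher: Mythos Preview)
Your proof is correct and follows essentially the same route as the paper: expand the difference, apply Lemma~\ref{lemma_0} with $n=m+1$, pull out the deterministic coordinate differences, bound by the maximum expectation, and then compare $\sum_j|\,\cdot\,|^q$ to $\|(p_1,b_1)-(p_2,b_2)\|^q$. Your coordinate-wise bound $|p_1^{(j)}-p_2^{(j)}|\le\|(p_1,b_1)-(p_2,b_2)\|$ is slightly more direct than the paper's detour through the $\max$, but both yield the same constant $M_3$. One small omission: you assert that the maximum expectation is finite ``by Assumption~\ref{integral_assump}'', but the finiteness of $\mathbb{E}\bigl[\bigl|c_2T-\sum_{i=1}^{N_T}e^{rT_i}X_i\bigr|^q\bigr]$ is not literally part of that assumption and needs a short argument (Lemma~\ref{lemma_0} again, plus $\mathbb{E}[|X_1|^q]<\infty$ and finiteness of all moments of $N_T$), which the paper spells out.
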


\begin{proof}
Let $(p_1,b_1), (p_2,b_2)\in \mathcal{C}$.
From Lemma \ref{lemma_0}, it holds that
\begin{align}\label{eq:le_3_4}
\begin{split}
     & \mathbb{E}\left[ \left| U_T\left(p_1,b_1\right) - U_T\left(p_2,b_2\right)\right|^q\right]^{\frac{1}{1+q}} \\
     & = \mathbb{E}\left[ \left| \sum_{j=1}^m uS_T^{(j)}(p^{(j)}_1 - p^{(j)}_2)+\left( c_2T- \sum_{i=1}^{N_T} e^{rT_i}X_i\right)(b_2-b_1)\right|^q\right]^{\frac{1}{1+q}}\\
     &\le (m+1)^{\frac{q}{1+q}}\mathbb{E}\left[ \sum_{j=1}^m \left|uS_T^{(j)}\right|^q\left|p^{(j)}_1 - p^{(j)}_2\right|^q+\left| c_2T- \sum_{i=1}^{N_T} e^{rT_i}X_i\right|^q\left|b_1-b_2\right|^q\right]^{\frac{1}{1+q}}\\
     & \le  (m+1)^{\frac{q}{1+q}}\max\left\{\mathbb{E}\left[\left|uS_T^{(1)}\right|^q\right],\ldots,\mathbb{E}\left[\left|uS_T^{(m)}\right|^q\right],\mathbb{E}\left[\left| c_2T- \sum_{i=1}^{N_T} e^{rT_i}X_i\right|^q\right] \right\}^{\frac{1}{1+q}}\\
     &\quad \times \left( \left|p^{(1)}_1 - p^{(1)}_2\right|^q+\cdots +\left|p^{(m)}_1 - p^{(m)}_2\right|^q+\left|b_1-b_2\right|^q \right)^{\frac{1}{1+q}}.
\end{split}
\end{align}
Here, $\mathbb{E}\left[\left| c_2T- \sum_{i=1}^{N_T} e^{rT_i}X_i\right|^q\right]$ is integrable because we obtain from Lemma \ref{lemma_0} and Assumption \ref{integral_assump} that
\begin{align}
    \mathbb{E}\left[\left| c_2T- \sum_{i=1}^{N_T} e^{rT_i}X_i\right|^q\right] \le (2c_2 T )^q + 2^q \bar{\rho}^q \mathbb{E}\left[\left|\sum_{i=1}^{N_T} X_i\right|^q\right]&\le (2c_2 T )^q + 2^q \bar{\rho}^q \mathbb{E}\left[N_T^q\sum_{i=1}^{N_T} \left|X_i\right|^q\right]\\
    &\le (2c_2 T )^q + 2^q \bar{\rho}^q \sum_{n=1}^{\infty} n^q\mathbb{E}\left[\sum_{i=1}^{n} \left|X_i\right|^q\right]\mathbb{P}(N_T = n)\\
    &\le (2c_2 T )^q + 2^q \bar{\rho}^q\mathbb{E}\left[ \left|X_1\right|^q\right]\mathbb{E}\left[N_T^{q+1}\right]< \infty,
\end{align}
where $\bar{\rho} \coloneqq \max\left\{1,e^{rT}\right\}$.
Furthermore, we have
\begin{align}
    &\left( \left|p^{(1)}_1 - p^{(1)}_2\right|^q+\cdots +\left|p^{(m)}_1 - p^{(m)}_2\right|^q+\left|b_1-b_2\right|^q \right)^{\frac{1}{1+q}}\\
    &\le (m+1)^{\frac{1}{1+q}} \max\left\{ \left|p^{(1)}_1 - p^{(1)}_2\right|^q,\ldots,\left|p^{(m)}_1 - p^{(m)}_2\right|^q,\left|b_1-b_2\right|^q \right\}^{\frac{1}{1+q}}\\
    &=  (m+1)^{\frac{1}{1+q}} \max\left\{ \left|p^{(1)}_1 - p^{(1)}_2\right|^2,\ldots,\left|p^{(m)}_1 - p^{(m)}_2\right|^2,\left|b_1-b_2\right|^2 \right\}^{\frac{q}{2(1+q)}}\\
    &\le  (m+1)^{\frac{1}{1+q}} \left\{ \left|p^{(1)}_1 - p^{(1)}_2\right|^2+\cdots+\left|p^{(m)}_1 - p^{(m)}_2\right|^2+\left|b_1-b_2\right|^2 \right\}^{\frac{q}{2(1+q)}}\\
    &= (m+1)^{\frac{1}{1+q}} \| (p_1,b_1) -(p_2,b_2)\|^{\frac{q}{1+q}}.
\end{align}
Consequently, combining the inequality \eqref{eq:le_3_4} with the above yields the following inequality:
\begin{align}
    & \mathbb{E}\left[ \left| U_T\left(p_1,b_1\right) - U_T\left(p_2,b_2\right)\right|^q\right]^{\frac{1}{1+q}} \\
    & \le (m+1)\max\left\{\mathbb{E}\left[\left|uS_T^{(1)}\right|^q\right],\ldots,\mathbb{E}\left[\left|uS_T^{(m)}\right|^q\right],\mathbb{E}\left[\left| c_2T- \sum_{i=1}^{N_T} e^{rT_i}X_i\right|^q\right] \right\}^{\frac{1}{1+q}} \| (p_1,b_1) -(p_2,b_2)\|^{\frac{q}{1+q}},
\end{align}
which is the desired result.
\end{proof}

With Lemma \ref{avikainen}, \ref{dense_bound}, and \ref{lemma_3_4}, we now establish H\"older continuity of the gradient \eqref{eq:grad}.

\begin{theorem}\label{thm:holder}
Suppose that Assumption \ref{assumption_grad} and Assumption \ref{integral_assump} hold.
Then, $\nabla F$ is $\frac{q}{2(1+q)}-$H\"older continuous on the feasible region $\mathcal{C}$.
\end{theorem}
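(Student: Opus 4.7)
The plan is to exploit the fact that the $(p,b)$-dependence in the components of $\nabla F(p,b)$ enters only through the indicator $\bm{1}_{\{U_T(p,b)<0\}}$ and the scalar factor $1/b$. Using the random variables $Q^{(j)}$ and $R$ introduced in \eqref{eq:Q}--\eqref{eq:R}, which are functions of $N_T, \{T_i\}, \{X_i\}, \{S_T^{(j)}\}$ only (independent of $(p,b)$), Theorem~\ref{theorem_grad} gives
\begin{align}
\partial_{p^{(j)}}F(p,b) = \tfrac{1}{b}\,\mathbb{E}\bigl[\bm{1}_A Q^{(j)} \bm{1}_{\{U_T(p,b)<0\}}\bigr],\qquad
\partial_b F(p,b) = \tfrac{1}{b}\,\mathbb{E}\bigl[\bm{1}_A R\, \bm{1}_{\{U_T(p,b)<0\}}\bigr].
\end{align}

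The first step is to decompose each gradient difference at $(p_1,b_1),(p_2,b_2)\in\mathcal{C}$ as
\begin{align}
\partial_{p^{(j)}}F(p_1,b_1) - \partial_{p^{(j)}}F(p_2,b_2)
= \tfrac{1}{b_1}\,\mathbb{E}\!\left[\bm{1}_A Q^{(j)}\bigl(\bm{1}_{\{U_T(p_1,b_1)<0\}} - \bm{1}_{\{U_T(p_2,b_2)<0\}}\bigr)\right]
+ \bigl(\tfrac{1}{b_1}-\tfrac{1}{b_2}\bigr)\,\mathbb{E}\!\left[\bm{1}_A Q^{(j)}\bm{1}_{\{U_T(p_2,b_2)<0\}}\right],
\end{align}
and similarly for $\partial_b F$ with $R$ in place of $Q^{(j)}$. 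For the indicator-difference term I would apply the Cauchy--Schwarz inequality, invoke the square integrability of $\bm{1}_A Q^{(j)}$ and $\bm{1}_A R$ established in the proof of the preceding proposition, and then use that $|\bm{1}_{\{U_T(p_1,b_1)<0\}}-\bm{1}_{\{U_T(p_2,b_2)<0\}}|^2 = |\bm{1}_{\{U_T(p_1,b_1)<0\}}-\bm{1}_{\{U_T(p_2,b_2)<0\}}|$ since the difference takes values in $\{-1,0,1\}$. Chaining Lemma~\ref{avikainen} (with $\alpha=1,\beta=q$), Lemma~\ref{dense_bound} for the uniform density bound $\|f_{U_T(p_1,b_1)|A}\|_\infty \le M_2$, and Lemma~\ref{lemma_3_4} then yields
\begin{align}
\mathbb{E}\!\left[\bigl|\bm{1}_{\{U_T(p_1,b_1)<0\}}-\bm{1}_{\{U_T(p_2,b_2)<0\}}\bigr|\right]
\le C_1 \|(p_1,b_1)-(p_2,b_2)\|^{q/(1+q)},
\end{align}
so that after taking the square root the first term is bounded by a constant multiple of $\underline{b}^{-1}\|(p_1,b_1)-(p_2,b_2)\|^{q/(2(1+q))}$.

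For the $1/b$-difference term, the estimate $|1/b_1-1/b_2| \le \underline{b}^{-2}|b_1-b_2| \le \underline{b}^{-2}\|(p_1,b_1)-(p_2,b_2)\|$ together with $|\mathbb{E}[\bm{1}_A Q^{(j)}\bm{1}_{\{U_T(p_2,b_2)<0\}}]| \le \mathbb{E}[\bm{1}_A (Q^{(j)})^2]^{1/2} < \infty$ provides a Lipschitz bound. Because the feasible region $\mathcal{C}$ is bounded (with some diameter $D$), any Lipschitz bound $c\,\|\cdot\|$ satisfies $c\,\|x\| \le c\,D^{1-q/(2(1+q))}\,\|x\|^{q/(2(1+q))}$ on $\mathcal{C}$, so it can be absorbed into a H\"older bound with the desired exponent. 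Summing these two contributions over the $m+1$ partial derivatives and using $\|\nabla F(p_1,b_1)-\nabla F(p_2,b_2)\| \le \sum_j |\partial_{p^{(j)}}F(p_1,b_1)-\partial_{p^{(j)}}F(p_2,b_2)| + |\partial_b F(p_1,b_1)-\partial_b F(p_2,b_2)|$ yields the claimed $\tfrac{q}{2(1+q)}$-H\"older continuity.

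The main obstacle is essentially bookkeeping: making sure that the square integrability of $\bm{1}_A Q^{(j)}$ and $\bm{1}_A R$ (inherited from the earlier proposition and Assumption~\ref{assumption_grad}), the density bound $M_2$ (Lemma~\ref{dense_bound}), and the moment bound $M_3$ (Lemma~\ref{lemma_3_4}, which requires Assumption~\ref{integral_assump}) all combine with the correct powers so that the square root produced by Cauchy--Schwarz yields precisely the exponent $q/(2(1+q))$ rather than something weaker.
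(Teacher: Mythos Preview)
Your proposal is correct and follows essentially the same route as the paper: the same two-term decomposition (one term carrying the $1/b$-difference, the other the indicator difference), Cauchy--Schwarz on the indicator term followed by Lemmas~\ref{avikainen}, \ref{dense_bound}, and \ref{lemma_3_4}, and conversion of the Lipschitz $1/b$-term to a $\tfrac{q}{2(1+q)}$-H\"older bound via boundedness of~$\mathcal{C}$. The only cosmetic differences are that the paper uses $\alpha=2$ in Lemma~\ref{avikainen} (equivalent to your $\alpha=1$ since the indicator difference is $\{0,\pm1\}$-valued), bounds $|b_1-b_2|$ directly by $|1-\underline{b}|$ rather than the full diameter of $\mathcal{C}$, and assembles the components via the $\ell_2$ norm rather than the $\ell_1$ bound you propose.
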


\begin{proof}
Let $(p_1,b_1), (p_2,b_2)\in \mathcal{C}$.
We set $Q^{(j)}$ and $R$ as in \eqref{eq:Q} and \eqref{eq:R}, respectively.
We aim to evaluate
\begin{align}\label{eq:2_norm}
\begin{split}
    \| \nabla F(p_1,b_1) - \nabla F(p_2,b_2)  \| &= \Bigg\{ \sum_{j=1}^m\left(\mathbb{E}\left[ \bm{1}_A W_{p^{(j)}_1} \bm{1}_{\left\{U_T\left(p_1,b_1\right) < 0\right\}}\right] - \mathbb{E}\left[ \bm{1}_A W_{p^{(j)}_2} \bm{1}_{\left\{U_T\left(p_2,b_2\right) < 0\right\}}\right] \right)^2 \\
    &\qquad + \left(\mathbb{E}\left[ \bm{1}_A W_{b_1} \bm{1}_{\left\{U_T\left(p_1,b_1\right) < 0\right\}}\right] - \mathbb{E}\left[ \bm{1}_A W_{b_2} \bm{1}_{\left\{U_T\left(p_2,b_2\right) < 0\right\}}\right] \right)^2
    \Bigg\}^{\frac{1}{2}}. 
\end{split}
\end{align}
To begin with, we evaluate $j$th term of the right hand side of \eqref{eq:2_norm} for $j=1,\ldots,m$.
Noting $W_{p^{(j)}_1} = \frac{Q^{(j)}}{b_1}$ and $W_{p^{(j)}_2} = \frac{Q^{(j)}}{b_2}$, we have
\begin{align}\label{eq:p_inequality_1}
\begin{split}
    &\left| \mathbb{E}\left[ \bm{1}_A W_{p^{(j)}_1} \bm{1}_{\left\{U_T\left(p_1,b_1\right) < 0\right\}}\right] - \mathbb{E}\left[ \bm{1}_A W_{p^{(j)}_2} \bm{1}_{\left\{U_T\left(p_2,b_2\right) < 0\right\}}\right] \right| \\
    &\le  \mathbb{E}\left[  \left|  \bm{1}_A W_{p^{(j)}_1} \bm{1}_{\left\{U_T\left(p_1,b_1\right) < 0\right\}} -  \bm{1}_A W_{p^{(j)}_2} \bm{1}_{\left\{U_T\left(p_2,b_2\right) < 0\right\}}  \right| \right]\\
    &=  \mathbb{E}\left[  \left|  \bm{1}_A \bm{1}_{\left\{U_T\left(p_1,b_1\right) < 0\right\}} \left( W_{p^{(j)}_1}  - W_{p^{(j)}_2}  \right) + \bm{1}_A \left(W_{p^{(j)}_2} \bm{1}_{\left\{U_T\left(p_1,b_1\right) < 0\right\}} -  W_{p^{(j)}_2} \bm{1}_{\left\{U_T\left(p_2,b_2\right) < 0\right\}} \right) \right| \right]\\
    &\le  \mathbb{E}\left[ \bm{1}_A \left|  W_{p^{(j)}_1}- W_{p^{(j)}_2} \right| \right] + \mathbb{E}\left[\bm{1}_A \left| W_{p^{(j)}_2} \right| \left| \bm{1}_{\left\{U_T\left(p_1,b_1\right) < 0\right\}} - \bm{1}_{\left\{U_T\left(p_2,b_2\right) < 0\right\}}  \right| \right]\\
    &= \mathbb{E}\left[ \bm{1}_A |Q^{(j)}| \left|\frac{1}{b_1}-\frac{1}{b_2}\right|\right] + \mathbb{E}\left[\frac{\bm{1}_A|Q^{(j)}|}{b_2}\left|\bm{1}_{\left\{U_T\left(p_1,b_1\right) < 0\right\}} - \bm{1}_{\left\{U_T\left(p_2,b_2\right) < 0\right\}}\right|\right]\\
    &\le  \frac{ |b_1 - b_2|\mathbb{E}\left[\bm{1}_A |Q^{(j)}| \right]}{\underline{b}^2}   +  \frac{\mathbb{E}\left[ \bm{1}_A |Q^{(j)}|^2  \right]^{\frac{1}{2}}}{\underline{b}} E \left[ \left |\bm{1}_{\left\{U_T\left(p_1,b_1\right) < 0\right\}} - \bm{1}_{\left\{U_T\left(p_2,b_2\right) < 0\right\}}\right|^2 \right]^{\frac{1}{2}},
\end{split}
\end{align}
where the second inequality follows from the triangle inequality
and the second term in the last inequality follows from the Cauchy-Schwartz inequality.
For the second term of the right hand side of \eqref{eq:p_inequality_1}, combining Lemma \ref{avikainen} and \ref{dense_bound} yields
\begin{align}
      &\frac{\mathbb{E}\left[ \bm{1}_A |Q^{(j)}|^2  \right]^{\frac{1}{2}}}{\underline{b}}E \left[ \left |\bm{1}_{\left\{U_T\left(p_1,b_1\right) < 0\right\}} - \bm{1}_{\left\{U_T\left(p_2,b_2\right) < 0\right\}}\right|^2 \right]^{\frac{1}{2}}\\
      &\le  \frac{\sqrt{2^{\frac{2q+1}{1+q}}\left(1-e^{-\lambda T}\right)^{\frac{q}{1+q}}\| M_2^{\frac{q}{1+q}} \mathbb{E}\left[ \bm{1}_A |Q^{(j)}|^2  \right]}}{\underline{b}}\mathbb{E}\left[ \left| U_T\left(p_1,b_1\right) - U_T\left(p_2,b_2\right)\right|^q\right]^{\frac{1}{2(1+q)}}\\
     & \le \frac{\sqrt{2^{\frac{2q+1}{1+q}}\left(1-e^{-\lambda T}\right)^{\frac{q}{1+q}}\| M_2^{\frac{q}{1+q}} M_3  \mathbb{E}\left[ \bm{1}_A |Q^{(j)}|^2  \right]}}{\underline{b}} \| (p_1,b_1) -(p_2,b_2)\|^{\frac{q}{2(1+q)}},
\end{align}
where the last inequality follows from Lemma \ref{lemma_3_4}.
From the inequality \eqref{eq:p_inequality_1} and the above, we have
\begin{align}
     &\left| \mathbb{E}\left[ \bm{1}_A W_{p^{(j)}_1} \bm{1}_{\left\{U_T\left(p_1,b_1\right) < 0\right\}}\right] - \mathbb{E}\left[ \bm{1}_A W_{p^{(j)}_2} \bm{1}_{\left\{U_T\left(p_2,b_2\right) < 0\right\}}\right] \right| \\
    &\le  \frac{ |b_1 - b_2|\mathbb{E}\left[ \bm{1}_A |Q^{(j)}|  \right]}{\underline{b}^2} + \frac{\sqrt{2^{\frac{2q+1}{1+q}}\left(1-e^{-\lambda T}\right)^{\frac{q}{1+q}}\| M_2^{\frac{q}{1+q}} M_3  \mathbb{E}\left[ \bm{1}_A |Q^{(j)}|^2  \right]}}{\underline{b}} \| (p_1,b_1) -(p_2,b_2)\|^{\frac{q}{2(1+q)}}\\
    &\le \underbrace{\left(\frac{  |1 - \underline{b}|^{\frac{2+q}{2(1+q)}}\mathbb{E}\left[ \bm{1}_A |Q^{(j)}|  \right]}{\underline{b}^2} + \frac{\sqrt{2^{\frac{2q+1}{1+q}}\left(1-e^{-\lambda T}\right)^{\frac{q}{1+q}}\| M_2^{\frac{q}{1+q}} M_3  \mathbb{E}\left[ \bm{1}_A |Q^{(j)}|^2  \right]}}{\underline{b}} \right)}_{\eqqcolon L_j} \| (p_1,b_1) -(p_2,b_2)\|^{\frac{q}{2(1+q)}}.
\end{align}
Evaluating the last term of the right hand side of \eqref{eq:2_norm} in the same way, we obtain from $W_{b_1} = \frac{R}{b_1}$ and $W_{b_2} = \frac{R}{b_2}$ that
\begin{align}
    &\left| \mathbb{E}\left[ \bm{1}_A W_{b_1} \bm{1}_{\left\{U_T\left(p_1,b_1\right) < 0\right\}}\right] - \mathbb{E}\left[ \bm{1}_A W_{b_2} \bm{1}_{\left\{U_T\left(p_2,b_2\right) < 0\right\}}\right] \right| \\
    &\le  \mathbb{E}\left[ \bm{1}_A \left|  W_{b_1}- W_{b_2} \right| \right] + \mathbb{E}\left[\bm{1}_A \left| W_{b_2} \right| \left| \bm{1}_{\left\{U_T\left(p_1,b_1\right) < 0\right\}} - \bm{1}_{\left\{U_T\left(p_2,b_2\right) < 0\right\}}  \right| \right]\\
    &\le \mathbb{E}\left[ \bm{1}_A |R| \left|\frac{1}{b_1}-\frac{1}{b_2}\right|\right] + \mathbb{E}\left[\frac{|R|}{b_2}\left|\bm{1}_{\left\{U_T\left(p_1,b_1\right) < 0\right\}} - \bm{1}_{\left\{U_T\left(p_2,b_2\right) < 0\right\}}\right|\right]\\
    &\le  \frac{  |b_1 - b_2|\mathbb{E}\left[\bm{1}_A |R| \right]}{\underline{b}^2}   +  \frac{\mathbb{E}\left[ \bm{1}_A |R|^2  \right]^{\frac{1}{2}}}{\underline{b}} E \left[ \left |\bm{1}_{\left\{U_T\left(p_1,b_1\right) < 0\right\}} - \bm{1}_{\left\{U_T\left(p_2,b_2\right) < 0\right\}}\right|^2 \right]^{\frac{1}{2}}\\
    &\le  \underbrace{\left(\frac{  |1 - \underline{b}|^{\frac{2+q}{2(1+q)}}\mathbb{E}\left[ \bm{1}_A |R|  \right]}{\underline{b}^2} + \frac{\sqrt{2^{\frac{2q+1}{1+q}}\left(1-e^{-\lambda T}\right)^{\frac{q}{1+q}}\| M_2^{\frac{q}{1+q}} M_3 \mathbb{E}\left[ \bm{1}_A |R|^2  \right]}}{\underline{b}} \right)}_{\eqqcolon L_{m+1}} \| (p_1,b_1) -(p_2,b_2)\|^{\frac{q}{2(1+q)}}.
\end{align}
Consequently, the equality \eqref{eq:2_norm} can be evaluated as
\begin{align}
    &\| \nabla F(p_1,b_1) - \nabla F(p_2,b_2)  \| 
     \le \left\{ \sum_{j=1}^{m+1} L_j^2 \| (p_1,b_1) -(p_2,b_2)\|^{\frac{q}{1+q}} \right\}^{\frac{1}{2}}
     = \sqrt{ \sum_{j=1}^{m+1} L_j^2 } ~ \| (p_1,b_1) -(p_2,b_2)\|^{\frac{q}{2(1+q)}}.
\end{align}
This completes the proof.
\end{proof}

We see from Theorem \ref{thm:holder} that if any moments of $S_T^{(j)}$ and $X_i$ exist, then $\nabla F$ is $\nu$-H\"older continuous for any $\nu < \frac{1}{2}$.

\section{Stochastic projected gradient method under H\"older condition}\label{sec:sgd}
In this section, we consider the following general constrained optimization problems:
\begin{align}
\underset{x\in\mathcal{X}} {\text{minimize}} & \quad f(x), \label{eq:obj_f}
\end{align}
where the function $f:\mathbb{R}^d \rightarrow \mathbb{R}$ is differentiable on $\mathcal{X}$, not necessarily convex, and $\mathcal{X}$ is closed convex set in Euclidean space $\mathbb{R}^d$.
Note that the problem \eqref{eq:obj_f} includes the ruin probability minimization problem \eqref{eq:opt}.
We assume that there exists $f^*$ such that $f^* \le f(x)$ for any $x \in \mathcal{X}$ and the exact gradient $\nabla f $ of the objective function $f$ is not available.
While Lipschitz continuity of the gradient is a standard assumption (e.g., \citep{ghadimi2013stochastic,ghadimi2016mini}), we relax this assumption to H\"older continuity.
\begin{assumption}\label{holder_con}
 Let $\nu \in (0,1]$. $\nabla f$ is $\nu$-H\"older continuous on $\mathcal{X}$:~there exists $L>0,~\nu \in (0,1]$ such that
\begin{align}       
    \|\nabla f(y) - \nabla f(x) \| \le L \| y -x\|^{\nu}
\end{align}
for any $x,~y\in \mathcal{X}$.
\end{assumption}

We consider the mini-batch stochastic projected gradient method (Algorithm \ref{SPG}) for the problem \eqref{eq:obj_f}. We assume that a stochastic gradient $G(x_k,\xi_{k,i})$ can be generated at iteration $k$ of the algorithm, given $x_k$, where $\{\xi_{k,i}\}_{i \in [m_k]}$ are i.i.d. random vector independent of $\{x_0,x_1,\ldots,x_k\}$.

\begin{algorithm}                     
    \caption{Mini-batch stochastic projected gradient method }         
    \label{SPG}                          
    \begin{algorithmic}      
    \STATE {\bfseries Input:} $x_0 \in \mathcal{X}$, stepsizes $\{\gamma_k\}\subset\mathbb{R}_{++}$, batch sizes $\{m_k\}\subset\mathbb{N}$ and $k=0$.
    \REPEAT
    \STATE Generate stochastic gradients $\{G(x_k,\xi_{k,i})\}_{i\in[m_k]}$.
    \STATE Calculate $G_k = \frac{1}{m_k}\sum_{i \in [m_k]} G(x_k,\xi_{k,i})$.
    \STATE Compute $x_{k+1} = \proj_{\mathcal{X}}(x_k - \gamma_k G_k)$.
    \STATE Set $k \leftarrow k+1$.
    \UNTIL Terminated criterion is satisfied.
    \end{algorithmic}
\end{algorithm}

\citet{lei2019stochastic}, \citet{patel2021stochastic}, and \citet{patel2022global} provide a theoretical analysis of the stochastic gradient method under the assumption of H\"older continuity.
\citet{lei2019stochastic} assume that the functions inside the expectation of the object function are H\"older continuous and establish a convergence result for the unconstrained problem.
Since the problem \eqref{eq:opt} we would like to solve is constrained, and the function inside the expectation is not only non-differentiable but also discontinuous, and hence it is necessary to impose a more relaxed H\"older continuity.
While \citet{patel2021stochastic} and \citet{patel2022global} consider Assumption \ref{holder_con}, they mainly investigate almost sure convergence for unconstrained problems and do not provide convergence rates. Therefore, we establish a convergence analysis under Assumption \ref{holder_con} and provide convergence rates.

The problem \eqref{eq:obj_f} can be equivalently rewritten to
\begin{align}
\underset{x\in\mathbb{R}^d} {\text{minimize}} & \quad h(x) \coloneqq f(x) + \delta_{\mathcal{X}}(x), \label{eq:obj_h}
\end{align}
where the function $\delta_{\mathcal{X}}(x)$ is given by
\begin{align}
    \delta_{\mathcal{X}}(x) \coloneqq \left\{
    \begin{alignedat}{2}
        & 0,    &\quad&   x \in \mathcal{X},   \\
        & +\infty, &\quad&  x \notin \mathcal{X}.
    \end{alignedat}
    \right.
\end{align}
Any local minimizer $x^*$ of \eqref{eq:obj_f} satisfies
\begin{equation}
    0 \in \partial h(x^*) \coloneqq \nabla f(x^*) + \partial \delta_{\mathcal{X}}(x^*),
\end{equation}
which is a standard necessary optimality condition (see, e.g., \citep[Theorem 3.72 (a)]{beck2017first}).
In view of this, the optimality measure $\dist(0,\partial h(x))$ is exploited in this paper.

To establish a convergence analysis, we make the following assumptions for the stochastic gradient $G(x_k,\xi_{k,i})$.
The first assumption requires the stochastic gradients to be unbiased.
The second assumption is slightly stronger than the standard one: $\mathbb{E}[\|G(x_k,\xi_{k,i})- \nabla f(x_k)\|^2| \mathcal{F}_k] \le \sigma^2$.

\begin{assumption}\label{assumption_alg}
    For any non-negative integer $k,i$, we have
    \begin{enumerate}[(a)]
        \item $\mathbb{E}[G(x_k,\xi_{k,i})| \mathcal{F}_k] = \nabla f(x_k)$\label{eq:A},
        \item $\mathbb{E}[\|G(x_k,\xi_{k,i})\|^2| \mathcal{F}_k] \le
        \sigma^2$,\label{eq:B}
    \end{enumerate}
    where $\sigma > 0$ and $\{\mathcal{F}_k\}$ is the natural filtration of $\{x_k\}$.
\end{assumption}

The following lemma is an immediate consequence of the fundamental theorem of calculus (e.g., \citep[Lemma 1]{yashtini2016global}).

\begin{lemma}\label{hol_leq}
    Suppose Assumption \ref{holder_con}.
    Then, for any $x,~y\in \mathcal{X}$, we have
    \begin{align}\label{eq:holleq}
        f(y) \le f(x) + \innerprod<\nabla f(x),y-x > + \frac{L}{1+\nu} \|y - x\|^{1+\nu}.
    \end{align}
\end{lemma}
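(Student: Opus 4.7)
The plan is to establish the inequality by combining the fundamental theorem of calculus with the H\"older bound on $\nabla f$, mirroring the classical descent lemma for Lipschitz gradients. Because $\mathcal{X}$ is convex, the segment $\{x + t(y-x) : t \in [0,1]\}$ lies entirely in $\mathcal{X}$, so Assumption \ref{holder_con} applies along it and $\nabla f$ is in particular continuous there, validating the use of the fundamental theorem.

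Concretely, I would first apply the fundamental theorem of calculus to $\varphi(t) \coloneqq f(x + t(y-x))$ to obtain
\begin{align}
    f(y) - f(x) = \int_0^1 \innerprod<\nabla f(x + t(y-x)), y-x> \, dt.
\end{align}
Writing $\innerprod<\nabla f(x), y-x> = \int_0^1 \innerprod<\nabla f(x), y-x> \, dt$ and subtracting gives
\begin{align}
    f(y) - f(x) - \innerprod<\nabla f(x), y-x> = \int_0^1 \innerprod<\nabla f(x + t(y-x)) - \nabla f(x), y-x> \, dt.
\end{align}

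Next, I would bound the right-hand side using the Cauchy--Schwarz inequality pointwise in $t$, followed by Assumption \ref{holder_con} applied to the pair $(x, x+t(y-x)) \in \mathcal{X} \times \mathcal{X}$:
\begin{align}
    \int_0^1 \innerprod<\nabla f(x + t(y-x)) - \nabla f(x), y-x> \, dt
    &\le \int_0^1 \|\nabla f(x + t(y-x)) - \nabla f(x)\| \cdot \|y-x\| \, dt \\
    &\le \int_0^1 L\, \|t(y-x)\|^{\nu} \|y-x\| \, dt \\
    &= L \|y-x\|^{1+\nu} \int_0^1 t^{\nu} \, dt
    = \frac{L}{1+\nu} \|y-x\|^{1+\nu},
\end{align}
which yields the claimed inequality \eqref{eq:holleq}.

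There is no substantive obstacle here: the only thing to verify is that the integrand is well defined and integrable, which follows from continuity of $\nabla f$ on the compact segment from $x$ to $y$ (itself implied by $\nu$-H\"older continuity on $\mathcal{X}$). The rest is a direct calculation, and the factor $\frac{1}{1+\nu}$ arises exactly from integrating $t^{\nu}$ over $[0,1]$.
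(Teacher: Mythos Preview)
Your proof is correct and follows exactly the approach the paper indicates: the paper does not spell out a proof but simply states that the lemma is an immediate consequence of the fundamental theorem of calculus (citing \citep[Lemma~1]{yashtini2016global}), which is precisely the argument you carry out.
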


We first provide an upper bound of the weighted sum of the projected gradient mapping $\tilde{g}_{k} \coloneqq P_{\mathcal{X}}(x_k,G_k,\gamma_k)$.
To prove the result, we require the following lemma.

\begin{lemma}\label{lemma_4}
Suppose Assumption \ref{assumption_alg}.
We obtain the following two inequalities.
\begin{enumerate}[(a)]
    \item $\mathbb{E}[ \|\nabla f(x_k)-G_k\|^2] \le \frac{\sigma^2}{m_k}$.
    \item $\mathbb{E}[\|G_k\|^{1+\nu}] \le \sigma^{1+\nu}$.
\end{enumerate}
\end{lemma}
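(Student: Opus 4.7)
The plan for part (a) is to exploit the fact that, conditionally on $\mathcal{F}_k$, the summands $G(x_k,\xi_{k,i}) - \nabla f(x_k)$ are i.i.d. and mean-zero by Assumption \ref{assumption_alg}\eqref{eq:A}. I would first write
\begin{align}
G_k - \nabla f(x_k) = \frac{1}{m_k}\sum_{i=1}^{m_k}\bigl(G(x_k,\xi_{k,i}) - \nabla f(x_k)\bigr),
\end{align}
and then, using the conditional independence together with the vanishing of conditional cross terms, reduce the conditional expectation of the squared norm to a single-sample calculation:
\begin{align}
\mathbb{E}\bigl[\|G_k - \nabla f(x_k)\|^2 \,\big|\, \mathcal{F}_k\bigr] = \frac{1}{m_k^2}\sum_{i=1}^{m_k}\mathbb{E}\bigl[\|G(x_k,\xi_{k,i}) - \nabla f(x_k)\|^2 \,\big|\, \mathcal{F}_k\bigr].
\end{align}
Each summand satisfies $\mathbb{E}[\|G(x_k,\xi_{k,i}) - \nabla f(x_k)\|^2 | \mathcal{F}_k] = \mathbb{E}[\|G(x_k,\xi_{k,i})\|^2 | \mathcal{F}_k] - \|\nabla f(x_k)\|^2 \le \sigma^2$ by Assumption \ref{assumption_alg}\eqref{eq:B}. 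Dividing gives the conditional bound $\sigma^2/m_k$, and taking total expectation finishes (a).

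For part (b), I would combine the convexity of $\|\cdot\|^2$ with the concavity of $t \mapsto t^{(1+\nu)/2}$ on $[0,\infty)$, which is available because $\nu \in (0,1]$ implies $(1+\nu)/2 \le 1$. First, by Jensen's inequality on the convex function $\|\cdot\|^2$ and Assumption \ref{assumption_alg}\eqref{eq:B},
\begin{align}
\mathbb{E}\bigl[\|G_k\|^2 \,\big|\, \mathcal{F}_k\bigr] \le \frac{1}{m_k}\sum_{i=1}^{m_k}\mathbb{E}\bigl[\|G(x_k,\xi_{k,i})\|^2 \,\big|\, \mathcal{F}_k\bigr] \le \sigma^2,
\end{align}
so $\mathbb{E}[\|G_k\|^2] \le \sigma^2$ after taking total expectation. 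Then, applying Jensen's inequality with the concave function $t \mapsto t^{(1+\nu)/2}$,
\begin{align}
\mathbb{E}[\|G_k\|^{1+\nu}] = \mathbb{E}\bigl[(\|G_k\|^2)^{(1+\nu)/2}\bigr] \le \bigl(\mathbb{E}[\|G_k\|^2]\bigr)^{(1+\nu)/2} \le (\sigma^2)^{(1+\nu)/2} = \sigma^{1+\nu}.
\end{align}

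Neither part presents a real obstacle; the only subtle point is to notice in (b) that the bound $\sigma^2$ is on the second moment (not the variance), which is exactly what lets one pass from $\|\cdot\|^2$ to $\|\cdot\|^{1+\nu}$ via the monotone concave power, and to be careful in (a) that the filtration $\mathcal{F}_k$ makes $x_k$ measurable so that $\nabla f(x_k)$ can be pulled out of the conditional expectation when centring the summands. Both inequalities then follow from routine conditioning plus Jensen arguments.
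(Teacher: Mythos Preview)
Your proof is correct and follows essentially the same strategy as the paper: part (a) is identical, and part (b) likewise reduces to the second-moment bound $\mathbb{E}[\|G_k\|^2]\le\sigma^2$ followed by the concave power $t\mapsto t^{(1+\nu)/2}$ (the paper phrases this step as H\"older's inequality). Your use of Jensen's inequality for the convex map $\|\cdot\|^2$ to obtain $\mathbb{E}[\|G_k\|^2\mid\mathcal{F}_k]\le\sigma^2$ is a bit cleaner than the paper's route via the triangle inequality and Cauchy--Schwarz on the cross terms, but the argument and conclusion are the same.
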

\begin{proof}
First, we prove the inequality (a). Setting $\delta_{k,i} \coloneqq \nabla f(x_k)-G(x_k,\xi_{k,i})$, it holds that
\begin{align}
    \mathbb{E}[ \|\nabla f(x_k)-G_k\|^2|\mathcal{F}_k] &= \frac{1}{m_k^2}\mathbb{E}\left[ \left\| \sum_{i \in [m_k]} \left(\nabla f(x_k) - G(x_k,\xi_{k,i})\right)\right\|^2 \middle| \mathcal{F}_k\right] \\
    &=\frac{1}{m_k^2} \mathbb{E}\left[\left\| \sum_{i \in [m_k]} \delta_{k,i}\right\|^2 \middle| \mathcal{F}_k\right] \\
    &=\frac{1}{m_k^2}  \mathbb{E}\left[\left\innerprod< \sum_{i \in [m_k]} \delta_{k,i},\sum_{i \in [m_k]} \delta_{k,i} \right> \middle| \mathcal{F}_k\right] \\
    &=\frac{1}{m_k^2}  \mathbb{E}\left[
    \sum_{i \in [m_k]}\|\delta_{k,i}\|^2 + \sum_{i\neq j }
    \left\innerprod< \delta_{k,i},\delta_{k,j} \right> \middle| \mathcal{F}_k\right]\\
    & = \frac{1}{m_k} \mathbb{E}[\|\delta_{k,1}\|^2|\mathcal{F}_k] +\frac{1}{m_k^2}\sum_{i\neq j } \left\innerprod< \mathbb{E}[\delta_{k,i}|\mathcal{F}_k],\mathbb{E}[\delta_{k,j}|\mathcal{F}_k] \right>\\
    &= \frac{1}{m_k}    \mathbb{E}\left[
   \|\delta_{k,1}\|^2 \middle| \mathcal{F}_k\right] \le \frac{1}{m_k}  \mathbb{E}\left[
   \|G(x_k,\xi_{k,1})\|^2 \middle| \mathcal{F}_k\right] \le \frac{\sigma^2}{m_k},
\end{align} 
where the last equality holds since $\mathbb{E}[\delta_{k,i}|\mathcal{F}_k]=0$ from Assumption \ref{assumption_alg} (a), and the last inequality follows from Assumption \ref{assumption_alg} (b).
By taking the expectation of the above, we have
\begin{align}
   \mathbb{E}[ \|\nabla f(x_k)-G_k\|^2] \le \frac{\sigma^2}{m_k}.
\end{align}
Finally, we show the inequality (b).
By using H\"older's inequality, we obtain
\begin{align}
    \mathbb{E}[\|G_k\|^{1+\nu}] \le  \mathbb{E}[\|G_k\|^2]^\frac{1+\nu}{2} = \mathbb{E}\left[\mathbb{E}\left[\|G_k\|^2 \middle| \mathcal{F}_k\right]\right]^\frac{1+\nu}{2}.\label{eq:lem_4}
\end{align}
On the other hand, we have
\begin{align}\label{eq:lem_4_1}
\begin{split}
    \mathbb{E}\left[\|G_k\|^2 \middle| \mathcal{F}_k\right] 
  &=\mathbb{E}\left[\frac{1}{m_k^2}\left\|\sum_{i \in [m_k]}G(x_k,\xi_{k,i})\right\|^2 \middle| \mathcal{F}_k\right]\\
  &\le \mathbb{E}\left[\frac{1}{m_k^2}\left(\sum_{i \in [m_k]}\|G(x_k,\xi_{k,i})\|\right)^2 \middle| \mathcal{F}_k \right] \\
  &= \mathbb{E}\left[\frac{1}{m_k^2}\left(\sum_{i \in [m_k]}\|G(x_k,\xi_{k,i})\|^2 +\sum_{i\neq j }\|G(x_k,\xi_{k,i}) \|\|G(x_k,\xi_{k,j})\|\right) \middle| \mathcal{F}_k \right]\\
  &= \frac{1}{m_k}\mathbb{E}\left[\|G(x_k,\xi_{k,1})\|^2\middle| \mathcal{F}_k \right] +\frac{1}{m_k^2}\sum_{i\neq j }\mathbb{E}\left[\|G(x_k,\xi_{k,i}) \|\|G(x_k,\xi_{k,j})\|\middle| \mathcal{F}_k \right],
\end{split}
\end{align}
where the first inequality follows from the triangle inequality.
In addition, by using the Cauchy-Schwartz inequality, the second term of the right hand side of \eqref{eq:lem_4_1} can be evaluated as
\begin{align}
    \frac{1}{m_k^2}\sum_{i\neq j }\mathbb{E}\left[\|G(x_k,\xi_{k,i}) \|\|G(x_k,\xi_{k,j})\|\middle| \mathcal{F}_k \right]
    &\le \frac{1}{m_k^2}\sum_{i\neq j }\mathbb{E}\left[\|G(x_k,\xi_{k,i})\|^2\middle| \mathcal{F}_k \right]^{\frac{1}{2}} \mathbb{E}\left[\|G(x_k,\xi_{k,j})\|^2\middle| \mathcal{F}_k \right]^{\frac{1}{2}}\\
    & = \left(1 - \frac{1}{m_k}\right)\mathbb{E}\left[\|G(x_k,\xi_{k,1})\|^2\middle| \mathcal{F}_k \right].
\end{align}
Thus, we obtain from the inequality \eqref{eq:lem_4_1} and the above that
\begin{align}\label{eq:lem_4_2}
\begin{split}
    \mathbb{E}\left[\|G_k\|^2 \middle| \mathcal{F}_k\right] 
  &\le \frac{1}{m_k}\mathbb{E}\left[\|G(x_k,\xi_{k,1})\|^2\middle| \mathcal{F}_k \right] + \left(1 - \frac{1}{m_k}\right)\mathbb{E}\left[\|G(x_k,\xi_{k,1})\|^2\middle| \mathcal{F}_k \right] \\
  & = \mathbb{E}\left[\|G(x_k,\xi_{k,1})\|^2\middle| \mathcal{F}_k \right] \le \sigma^2,
\end{split}
\end{align}
where the last inequality follows from Assumption \ref{assumption_alg} (b).
Combining the inequalities \eqref{eq:lem_4} and \eqref{eq:lem_4_2} yields 
\begin{align}
    \mathbb{E}[\|G_k\|^{1+\nu}] \le  \sigma^{1+\nu},
\end{align}
which is the desired result.
\end{proof}

\begin{lemma}\label{proj_grad_bound}
    Suppose that Assumptions \ref{holder_con} and \ref{assumption_alg} hold and $\{x_k\}$ is generated by Algorithm \ref{SPG} without termination.
    Then, it holds that
    \begin{align}
        \sum_{k=0}^{N-1} \gamma_k \mathbb{E}\left[\| \tilde{g}_{k}\|^2\right] \le \Delta + \sigma^2\sum_{k= 0}^{N-1} \left(\frac{\gamma_k}{m_k}\right)+\frac{L\sigma^{1+\nu}}{1+\nu} \sum_{k= 0}^{N-1} \gamma_k^{1+\nu},
    \end{align}
    for any $N \in \mathbb{N}$, where $\Delta \coloneqq f(x_0)-f^*$.
\end{lemma}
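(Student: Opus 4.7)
The plan is to convert the statement into a one-step descent recursion for $f(x_{k+1})-f(x_k)$, bound the resulting noise and higher-order terms in expectation using Lemma \ref{lemma_4}, and then sum telescopically. First I would apply Lemma \ref{hol_leq} to the successive iterates. Since $x_{k+1}=\proj_\mathcal{X}(x_k-\gamma_kG_k)$, the definition \eqref{eq:gpg} gives $x_{k+1}-x_k=-\gamma_k\tilde g_k$, so Lemma \ref{hol_leq} reads
\begin{align}
    f(x_{k+1})\le f(x_k)-\gamma_k\innerprod<\nabla f(x_k),\tilde g_k>+\frac{L\gamma_k^{1+\nu}}{1+\nu}\|\tilde g_k\|^{1+\nu}.
\end{align}
Writing $\innerprod<\nabla f(x_k),\tilde g_k>=\innerprod<G_k,\tilde g_k>-\innerprod<G_k-\nabla f(x_k),\tilde g_k>$ and applying Lemma \ref{lemma_1} to the first piece to obtain $\innerprod<G_k,\tilde g_k>\ge\|\tilde g_k\|^2$ yields, after rearrangement,
\begin{align}
    \gamma_k\|\tilde g_k\|^2\le f(x_k)-f(x_{k+1})+\gamma_k\innerprod<G_k-\nabla f(x_k),\tilde g_k>+\frac{L\gamma_k^{1+\nu}}{1+\nu}\|\tilde g_k\|^{1+\nu}.
\end{align}

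The main obstacle is the cross term $\innerprod<G_k-\nabla f(x_k),\tilde g_k>$: it cannot be handled by a direct unbiasedness argument because $\tilde g_k$ is itself a function of $G_k$ and is therefore not $\mathcal{F}_k$-measurable. My plan is to introduce the auxiliary, $\mathcal{F}_k$-measurable quantity $\hat g_k\coloneqq P_\mathcal{X}(x_k,\nabla f(x_k),\gamma_k)$ and decompose
\begin{align}
    \innerprod<G_k-\nabla f(x_k),\tilde g_k>=\innerprod<G_k-\nabla f(x_k),\hat g_k>+\innerprod<G_k-\nabla f(x_k),\tilde g_k-\hat g_k>.
\end{align}
Conditioning on $\mathcal{F}_k$, the first summand vanishes by Assumption \ref{assumption_alg}(a) since $\hat g_k$ is $\mathcal{F}_k$-measurable. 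For the second summand, Cauchy--Schwarz together with Lemma \ref{lemma_2} applied to the pair $(G_k,\nabla f(x_k))$ gives $\|\tilde g_k-\hat g_k\|\le\|G_k-\nabla f(x_k)\|$, so
\begin{align}
    \mathbb{E}\left[\innerprod<G_k-\nabla f(x_k),\tilde g_k>\mid\mathcal{F}_k\right]\le\mathbb{E}\left[\|G_k-\nabla f(x_k)\|^2\mid\mathcal{F}_k\right],
\end{align}
whose total expectation is at most $\sigma^2/m_k$ by Lemma \ref{lemma_4}(a).

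For the last term, I would observe that $P_\mathcal{X}(x_k,0,\gamma_k)=0$ because $x_k\in\mathcal{X}$, so another application of Lemma \ref{lemma_2} gives $\|\tilde g_k\|\le\|G_k\|$, and hence $\mathbb{E}[\|\tilde g_k\|^{1+\nu}]\le\mathbb{E}[\|G_k\|^{1+\nu}]\le\sigma^{1+\nu}$ by Lemma \ref{lemma_4}(b). Taking expectations in the inequality for $\gamma_k\|\tilde g_k\|^2$ and combining these three bounds produces
\begin{align}
    \gamma_k\mathbb{E}[\|\tilde g_k\|^2]\le\mathbb{E}[f(x_k)]-\mathbb{E}[f(x_{k+1})]+\frac{\sigma^2\gamma_k}{m_k}+\frac{L\sigma^{1+\nu}\gamma_k^{1+\nu}}{1+\nu}.
\end{align}
Summing this telescoping inequality for $k=0,\ldots,N-1$ and using $f^*\le\mathbb{E}[f(x_N)]$ delivers the claimed bound with $\Delta=f(x_0)-f^*$.
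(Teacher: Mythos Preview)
Your proposal is correct and follows essentially the same route as the paper: apply Lemma~\ref{hol_leq}, use Lemma~\ref{lemma_1} to extract $\|\tilde g_k\|^2$, split the cross term via the $\mathcal{F}_k$-measurable auxiliary $\hat g_k=P_\mathcal{X}(x_k,\nabla f(x_k),\gamma_k)$ (the paper calls this $g_k$), kill the first piece by unbiasedness, bound the second piece via Lemma~\ref{lemma_2} and Cauchy--Schwarz, and control $\|\tilde g_k\|^{1+\nu}\le\|G_k\|^{1+\nu}$ using $P_\mathcal{X}(x_k,0,\gamma_k)=0$ together with Lemma~\ref{lemma_4}(b). The only cosmetic difference is that the paper sums first and then takes expectations, whereas you take expectations per step and then sum; both are equivalent.
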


\begin{proof}
    Since $\nabla f$ is H\"older continuous, we obtain from Lemma \ref{hol_leq} and \eqref{eq:gpg} that
\begin{align}
    f(x_{k+1}) &\le f(x_k) + \innerprod<\nabla f(x_k),x_{k+1}-x_k > + \frac{L}{1+\nu}\|x_{k+1} -x_k \|^{1+\nu}\\
    &\le f(x_k) -\gamma_k  \innerprod<\nabla f(x_k),\tilde{g}_{k} > + \frac{L \gamma_k^{1+\nu} }{1+\nu} \|\tilde{g}_{k}\| ^{1+\nu}\\
    &=f(x_k) -  \gamma_k  \innerprod<G_k,\tilde{g}_{k} > - \gamma_k  \innerprod<\nabla f(x_k)-G_k,\tilde{g}_{k} >+ \frac{L \gamma_k^{1+\nu} }{1+\nu} \|\tilde{g}_{k}\| ^{1+\nu}.
\end{align}
By using Lemma \ref{lemma_1} with $x=x_k,~\gamma=\gamma_k,~g= G_k$, we have
\begin{align}
        f(x_{k+1}) &\le f(x_k) -\gamma_k \| \tilde{g}_{k}\|^2 - \gamma_k \left\innerprod<\nabla f(x_k)-G_k,\tilde{g}_{k} \right> + \frac{L \gamma_k^{1+\nu} }{1+\nu} \|\tilde{g}_{k}\| ^{1+\nu}\\
        &=  f(x_k) -\gamma_k \| \tilde{g}_{k}\|^2 - \gamma_k \left\innerprod<\nabla f(x_k)-G_k,g_{k} \right> -\gamma_k\left\innerprod<\nabla f(x_k)-G_k,\tilde{g}_{k}-g_{k} \right>+ \frac{L \gamma_k^{1+\nu} }{1+\nu} \|\tilde{g}_{k}\| ^{1+\nu},
\end{align}
where $g_k \coloneqq P_{\mathcal{X}}(x_k,\nabla f(x_k),\gamma_k)$. Furthermore, it follows from $P_{\mathcal{X}}(x_k,0,\gamma_k) = 0$ that
\begin{align}
       f(x_{k+1}) &\le f(x_k) -\gamma_k \| \tilde{g}_{k}\|^2 - \gamma_k \left\innerprod<\nabla f(x_k)-G_k,g_{k} \right> + \gamma_k\left\|\nabla f(x_k)-G_k\| \|\tilde{g}_{k}-g_{k} \right\|\\
       &\quad + \frac{L \gamma_k^{1+\nu} }{1+\nu} \|\tilde{g}_{k}-P_{\mathcal{X}}(x_k,0,\gamma_k)\| ^{1+\nu}\\
       &\le f(x_k) -\gamma_k \| \tilde{g}_{k}\|^2 - \gamma_k \left\innerprod<\nabla f(x_k)-G_k,g_{k} \right> + \gamma_k\|\nabla f(x_k)-G_k\|^2 + \frac{L \gamma_k^{1+\nu} }{1+\nu} \|G_k\| ^{1+\nu}\label{eq:leq_6},
\end{align}
where the first inequality follows from the Cauchy-Schwartz inequality and the second inequality follows from Lemma \ref{lemma_2}.
Summing up the above inequality for $k=0,1,\ldots,N-1$, we obtain
\begin{align}
     &\sum_{k=0}^{N-1} \gamma_k \| \tilde{g}_{k}\|^2\\
     &\le f(x_0) - f(x_{N}) - \sum_{k=0}^{N-1} \gamma_k \left\innerprod<\nabla f(x_k)-G_k,g_{k} \right>+\sum_{k=0}^{N-1} \gamma_k\|\nabla f(x_k)-G_k\|^2+\sum_{k=0}^{N-1} \frac{L \gamma_k^{1+\nu} }{1+\nu} \|G_k\| ^{1+\nu}\\
    &\le \Delta - \sum_{k=0}^{N-1} \gamma_k \left\innerprod<\nabla f(x_k)-G_k,g_{k} \right>+\sum_{k=0}^{N-1} \gamma_k\|\nabla f(x_k)-G_k\|^2+\sum_{k=0}^{N-1} \frac{L \gamma_k^{1+\nu} }{1+\nu} \|G_k\| ^{1+\nu}\label{eq:leq_7}.
\end{align}
Therefore, taking the expectations of both sides, we have
\begin{align}
    \sum_{k=0}^{N-1} \gamma_k \mathbb{E}\left[ \| \tilde{g}_{k}\|^2\right] &\le \Delta -\sum_{k=0}^{N-1} \gamma_k  \mathbb{E}[ \innerprod<\nabla f(x_k)-G_k,g_{k} >]+\sum_{k=0}^{N-1} \gamma_k \mathbb{E}\left[\|\nabla f(x_k)-G_k\|^2 \right]
    \\
    &\quad+\sum_{k=0}^{N-1} \frac{L \gamma_k^{1+\nu} }{1+\nu}\mathbb{E}\left[ \|G_k\| ^{1+\nu}\right]\label{eq:leq_8}.
\end{align}
Since
\begin{align}
    \mathbb{E}[ \innerprod<\nabla f(x_k)-G_k,g_{k} >] = \mathbb{E}\left[\mathbb{E}[ \innerprod<\nabla f(x_k)-G_k,g_{k} >|\mathcal{F}_k]\right]= \mathbb{E}\left[ \left\innerprod<\mathbb{E}[\nabla f(x_k)-G_k|\mathcal{F}_k],g_{k} \right>\right] = 0
\end{align}
from Assumption \ref{assumption_alg} (a), we obtain from Lemma \ref{lemma_4} that
\begin{align}
    \sum_{k=0}^{N-1} \gamma_k \mathbb{E}\left[\| \tilde{g}_{k}\|^2\right] &\le \Delta +\sum_{k=0}^{N-1}  \gamma_k \mathbb{E}\left[\|\nabla f(x_k)-G_k\|^2 \right]
    +\sum_{k=0}^{N-1} \frac{L \gamma_k^{1+\nu} }{1+\nu}\mathbb{E}\left[ \|G_k\| ^{1+\nu}\right]\\
    & \le \Delta +\sigma^2 \sum_{k=0}^{N-1} \left(\frac{\gamma_k}{m_k}\right)
    +\frac{L \sigma^{1+\nu} }{1+\nu} \sum_{k=0}^{N-1} \gamma_k^{1+\nu} \label{eq:leq_9},
\end{align}
which is the desired result.
\end{proof}

The following theorem provides an upper bound of $\min_{k=0,\ldots,N-1}\mathbb{E}\left[\dist(0,\partial h(x_{k+1})) \right]$.
Convergence is ensured by selecting appropriate stepsizes and batch sizes.

\begin{theorem}\label{grad_bound}
Suppose that Assumptions \ref{holder_con} and \ref{assumption_alg} hold. Let $\{x_k\}$ be generated by Algorithm \ref{SPG} without termination, and there exists $\bar{\gamma} >0$ such that $\gamma_k \le \bar{\gamma}$. Then, for any iteration $N \in \mathbb{N}$,
\begin{enumerate}[(a)]
    \item if $\nu\in(0,1)$, we have
    \begin{align}
    &\min_{k=0,\ldots,N-1}\mathbb{E}\left[\dist(0,\partial h(x_{k+1})) \right]\\
    &\le \frac{ (1+L\nu)\sqrt{N} \sqrt{ \Delta  + \sigma^2 \sum_{k = 0}^{N-1} \left( \frac{\gamma_k}{m_k}\right)+\frac{L \sigma^{1+\nu} }{1+\nu} \sum_{k = 0}^{N-1} \gamma_k^{1+\nu}}+ L(1-\nu) \sum_{k = 0}^{N-1} \gamma_k^{\frac{\nu+1}{2(1-\nu)}} + \sigma \sum_{k = 0}^{N-1} \sqrt{\frac{\gamma_k}{m_k}} }{\sum_{k = 0}^{N-1} \sqrt{\gamma_k}},   
    \end{align}
    
    \item if $\nu = 1$, we have
    \begin{align}
    &\min_{k=0,\ldots,N-1}\mathbb{E}\left[\dist(0,\partial h(x_{k+1})) \right]
    \le \frac{ (1+L \bar{\gamma})\sqrt{N} \sqrt{ \Delta  + \sigma^2 \sum_{k = 0}^{N-1} \left( \frac{\gamma_k}{m_k}\right)+\frac{L \sigma^{2} }{2} \sum_{k = 0}^{N-1} \gamma_k^{2}}+\sigma \sum_{k = 0}^{N-1} \sqrt{\frac{\gamma_k}{m_k}}}{\sum_{k = 0}^{N-1} \sqrt{\gamma_k}},
    \end{align}
\end{enumerate}
 where $\Delta = f(x_0)-f^*$.
\end{theorem}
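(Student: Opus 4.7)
The plan is to control $\dist(0,\partial h(x_{k+1}))$ pointwise by $\tilde g_k$ together with stochastic-error terms, weight the resulting inequality by $\sqrt{\gamma_k}$ so that it can be summed via Cauchy--Schwarz and fed into Lemma \ref{proj_grad_bound}, and finish with the standard observation that the minimum is dominated by any weighted average.

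First I would convert the update rule into a subgradient inclusion. From $x_{k+1}=\proj_{\mathcal{X}}(x_k-\gamma_k G_k)$ the projection optimality condition gives $(x_k-\gamma_k G_k)-x_{k+1}\in\partial\delta_{\mathcal{X}}(x_{k+1})$. Since $x_k-x_{k+1}=\gamma_k\tilde g_k$ by the definition of the projected gradient mapping, and $\partial\delta_{\mathcal{X}}$ is a cone, this simplifies to $\tilde g_k-G_k\in\partial\delta_{\mathcal{X}}(x_{k+1})$, whence $\nabla f(x_{k+1})+\tilde g_k-G_k\in\partial h(x_{k+1})$. Combining with the triangle inequality, Assumption \ref{holder_con}, and $\|x_{k+1}-x_k\|=\gamma_k\|\tilde g_k\|$ yields the pointwise estimate
\begin{align}
    \dist(0,\partial h(x_{k+1}))\le L\gamma_k^{\nu}\|\tilde g_k\|^{\nu}+\|\nabla f(x_k)-G_k\|+\|\tilde g_k\|.
\end{align}

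Second I would multiply by $\sqrt{\gamma_k}$ and linearize the H\"older term. For $\nu\in(0,1)$, write $L\gamma_k^{1/2+\nu}\|\tilde g_k\|^{\nu}=L(\sqrt{\gamma_k}\|\tilde g_k\|)^{\nu}\cdot\gamma_k^{(1+\nu)/2}$ and apply the weighted AM--GM inequality $A^{\nu}B^{1-\nu}\le \nu A+(1-\nu)B$ with $A=\sqrt{\gamma_k}\|\tilde g_k\|$ and $B=\gamma_k^{(1+\nu)/(2(1-\nu))}$, obtaining
\begin{align}
    L\sqrt{\gamma_k}\gamma_k^{\nu}\|\tilde g_k\|^{\nu}\le L\nu\sqrt{\gamma_k}\|\tilde g_k\|+L(1-\nu)\gamma_k^{(1+\nu)/(2(1-\nu))}.
\end{align}
For $\nu=1$ the splitting degenerates, so instead use $\gamma_k^{3/2}\le\bar\gamma\sqrt{\gamma_k}$ to get $L\gamma_k^{3/2}\|\tilde g_k\|\le L\bar\gamma\sqrt{\gamma_k}\|\tilde g_k\|$. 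Taking expectations, summing over $k=0,\dots,N-1$, applying Jensen plus Lemma \ref{lemma_4}(a) in the form $\mathbb{E}[\|\nabla f(x_k)-G_k\|]\le\sigma/\sqrt{m_k}$, and using Cauchy--Schwarz $\sum_k\sqrt{\gamma_k}\sqrt{\mathbb{E}[\|\tilde g_k\|^2]}\le\sqrt{N}\sqrt{\sum_k\gamma_k\mathbb{E}[\|\tilde g_k\|^2]}$ produces the aggregate bound
\begin{align}
    \sum_{k=0}^{N-1}\sqrt{\gamma_k}\,\mathbb{E}\bigl[\dist(0,\partial h(x_{k+1}))\bigr]\le C_{\nu}\sqrt{N}\sqrt{\sum_{k=0}^{N-1}\gamma_k\mathbb{E}[\|\tilde g_k\|^2]}+\sigma\sum_{k=0}^{N-1}\sqrt{\gamma_k/m_k}+E_{\nu},
\end{align}
where $C_{\nu}=1+L\nu$ and $E_{\nu}=L(1-\nu)\sum_k\gamma_k^{(1+\nu)/(2(1-\nu))}$ in case (a), while $C_{\nu}=1+L\bar\gamma$ and $E_{\nu}=0$ in case (b). Plugging Lemma \ref{proj_grad_bound} in for $\sum_k\gamma_k\mathbb{E}[\|\tilde g_k\|^2]$ and dividing through by $\sum_k\sqrt{\gamma_k}$, via $\min_k\mathbb{E}[\dist(0,\partial h(x_{k+1}))]\le(\sum_k\sqrt{\gamma_k})^{-1}\sum_k\sqrt{\gamma_k}\mathbb{E}[\dist(0,\partial h(x_{k+1}))]$, yields both displayed bounds.

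The main obstacle is the sub-linear growth term $\gamma_k^{\nu}\|\tilde g_k\|^{\nu}$: without the right weighting one cannot recycle Lemma \ref{proj_grad_bound}, which only controls $\sum_k\gamma_k\mathbb{E}[\|\tilde g_k\|^2]$. The weighted AM--GM splitting above is precisely what converts it into a linear-in-$\|\tilde g_k\|$ piece (absorbing into the $(1+L\nu)$ prefactor) and a deterministic residual whose exponent $(1+\nu)/(2(1-\nu))$ matches the statement. The blow-up of this exponent at $\nu=1$ is exactly why case (b) is stated separately and requires $\gamma_k\le\bar\gamma$.
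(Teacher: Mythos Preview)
Your proposal is correct and follows essentially the same route as the paper: derive the subgradient inclusion from the projection step, bound $\dist(0,\partial h(x_{k+1}))$ by $\|\tilde g_k\|+L\gamma_k^{\nu}\|\tilde g_k\|^{\nu}+\|\nabla f(x_k)-G_k\|$, linearize the $\|\tilde g_k\|^{\nu}$ term, and then weight by $\sqrt{\gamma_k}$, sum, and combine Cauchy--Schwarz with Lemma~\ref{proj_grad_bound}. The only cosmetic difference is that the paper linearizes via the tangent-line (concavity) inequality with a free parameter $c_k$ and then specializes to $c_k=\gamma_k^{\nu/(1-\nu)}$, whereas you apply the weighted AM--GM inequality directly after multiplying by $\sqrt{\gamma_k}$; these two linearizations are algebraically identical and yield the same coefficients.
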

 
\begin{proof}
Noting that $\delta_{\mathcal{X}}(x)$ remains unchanged when multiplied by a constant, it holds that
\begin{align}
    x_{k+1} = \proj_{\mathcal{X}}(x_k -\gamma_k G_k) =\argmin_{y \in \mathcal{X}} \left\|x_k-\gamma_k G_k-y\right\|^2 =\argmin_{y \in \mathbb{R}^d} \left\{\left\|x_k-\gamma_k G_k-y\right\|^2 +2\gamma_k\delta_{\mathcal{X}}(y)\right\}.
\end{align}
In other words, $x_{k+1}$ is a minimizer of $\left\|x_k-\gamma_k G_k-y\right\|^2 +2\gamma_k\delta_{\mathcal{X}}(y)$, and hence we have $\frac{x_k -x_{k+1}}{\gamma_k} - G_k \in  \partial \delta_{\mathcal{X}}(x_{k+1})$ and $\frac{x_k -x_{k+1}}{\gamma_k} + \nabla f(x_{k+1}) - G_k \in \nabla f(x_{k+1}) + \partial \delta_{\mathcal{X}}(x_{k+1})$.
Thus, it holds that
\begin{align}
    \dist(0,\partial h(x_{k+1}))
    &\le \left\| \frac{x_k -x_{k+1}}{\gamma_k} + \nabla f(x_{k+1}) - G_k\right\|\\
    & = \left\| \frac{x_k -x_{k+1}}{\gamma_k} + \nabla f(x_{k+1}) - \nabla f(x_{k}) + \nabla f(x_{k}) - G_k \right\|\\
    & \le \left\| \frac{x_k -x_{k+1}}{\gamma_k}\right\| + \|\nabla f(x_{k+1}) - \nabla f(x_{k})\| + \|\nabla f(x_{k}) - G_k\|\\
    & \le \| \tilde{g}_{k} \| + L \|x_{k+1} - x_k\|^{\nu} + \|\nabla f(x_{k}) - G_k\| \\
    & = \| \tilde{g}_{k} \| + L \gamma_k^{\nu} \|\tilde{g}_{k}\|^{\nu} + \|\nabla f(x_{k}) - G_k\|,
\end{align}
where the last inequality follows from Assumption \ref{holder_con}.
Since it follows from the concavity of $L\gamma_k^{\nu} (\cdot)^\nu$ that
\begin{align}
    L \gamma_k^{\nu} \|\tilde{g}_{k}\|^{\nu} \le  L \gamma_k^{\nu} c_k^{\nu} + L\nu \gamma_k^{\nu} c_k^{\nu-1}(  \| \tilde{g}_{k} \| - c_k)
\end{align}
for any sequence $\{c_k\}\subset\mathbb{R}_{++}$, we have
\begin{align}
     \dist(0,\partial h(x_{k+1})) &\le \| \tilde{g}_{k} \| +L \gamma_k^{\nu} c_k^{\nu}+ L\nu \gamma_k^{\nu} c_k^{\nu-1}(  \| \tilde{g}_{k} \| - c_k) + \|\nabla f(x_{k}) - G_k\|\\
     & = (1+ L \nu \gamma_k^{\nu} c_k^{\nu-1}) \| \tilde{g}_{k} \|   + L(1-\nu)c_k^{\nu} \gamma_k^{\nu} + \|\nabla f(x_{k}) - G_k\|.
\end{align}
Taking the expectation of both sides yields
\begin{align}
    \mathbb{E}[ \dist(0,\partial h(x_{k+1})) ]& = (1+ L \nu \gamma_k^{\nu} c_k^{\nu-1}) \mathbb{E}[ \| \tilde{g}_{k} \|]  + L(1-\nu)c_k^{\nu} \gamma_k^{\nu} + \mathbb{E}[\|\nabla f(x_{k}) - G_k\|]\\
    &\le  (1+ L \nu \gamma_k^{\nu} c_k^{\nu-1}) \mathbb{E}[ \| \tilde{g}_{k} \|^2]^{\frac{1}{2}}   + L(1-\nu)c_k^{\nu} \gamma_k^{\nu} + \mathbb{E}[\|\nabla f(x_{k}) - G_k\|^2]^{\frac{1}{2}}\\
    &\le (1+ L \nu \gamma_k^{\nu} c_k^{\nu-1}) \mathbb{E}[ \| \tilde{g}_{k} \|^2]^{\frac{1}{2}} + L(1-\nu)c_k^{\nu} \gamma_k^{\nu} + \frac{\sigma}{\sqrt{m_k}},
\end{align}
where the first inequality follows from the Cauchy-Schwartz inequality, and the last inequality from Lemma \ref{lemma_4} (a). 
By multiplying both sides of the above inequality by $\sqrt{\gamma_k}$ and summing up it for $k=0,1,\ldots,N-1$,  we obtain from the Cauchy-Schwartz inequality that
\begin{align}
    \sum_{k=0}^{N-1} \sqrt{\gamma_k} \mathbb{E}[ \dist(0,\partial h(x_{k+1})) ] &\le  \sum_{k=0}^{N-1} (1+ L \nu \gamma_k^{\nu} c_k^{\nu-1}) \sqrt{\gamma_k \mathbb{E}[ \| \tilde{g}_{k} \|^2] } +  L(1-\nu) \sum_{k=0}^{N-1}c_k^{\nu} \gamma_k^{\nu+\frac{1}{2}} + \sigma \sum_{k=0}^{N-1}  \sqrt{\frac{\gamma_k}{m_k}}\\
    & \le \sqrt{ \sum_{k=0}^{N-1} (1+ L \nu \gamma_k^{\nu} c_k^{\nu-1})^2 } \sqrt{\sum_{k=0}^{N-1}  \gamma_k \mathbb{E}[ \| \tilde{g}_{k} \|^2] } +  L(1-\nu) \sum_{k=0}^{N-1}c_k^{\nu} \gamma_k^{\nu+\frac{1}{2}} \\ 
    &\quad + \sigma \sum_{k=0}^{N-1}  \sqrt{\frac{\gamma_k}{m_k}}.
\end{align}
We obtain from the above inequality and Lemma \ref{proj_grad_bound} that
\begin{align}\label{eq:leq_11}
\begin{split}
   &\min_{k=0,\ldots,N-1}\mathbb{E}[ \dist(0,\partial h(x_{k+1})) ] \\
   &\le \frac{\sqrt{ \sum_{k=0}^{N-1} (1+ L \nu \gamma_k^{\nu} c_k^{\nu-1})^2 } \sqrt{\sum_{k=0}^{N-1}  \gamma_k \mathbb{E}[ \| \tilde{g}_{k} \|^2] }+  L(1-\nu) \sum_{k=0}^{N-1}c_k^{\nu} \gamma_k^{\nu+\frac{1}{2}} + \sigma \sum_{k=0}^{N-1}  \sqrt{\frac{\gamma_k}{m_k}}}{\sum_{k=0}^{N-1} \sqrt{\gamma_k}}\\
   &\le \frac{\sqrt{ \sum_{k=0}^{N-1} (1+ L \nu \gamma_k^{\nu} c_k^{\nu-1})^2 } \sqrt{\Delta  + \sigma^2 \sum_{k = 0}^{N-1} \left( \frac{\gamma_k}{m_k}\right)+\frac{L \sigma^{1+\nu} }{1+\nu} \sum_{k = 0}^{N-1} \gamma_k^{1+\nu} }+  L(1-\nu) \sum_{k=0}^{N-1}c_k^{\nu} \gamma_k^{\nu+\frac{1}{2}}}{\sum_{k=0}^{N-1} \sqrt{\gamma_k}}\\
   &\quad+ \frac{\sigma \sum_{k=0}^{N-1}  \sqrt{\frac{\gamma_k}{m_k}}}{\sum_{k=0}^{N-1} \sqrt{\gamma_k}},
\end{split}
\end{align}
if $\nu \in (0,1)$, setting $c_k = \gamma_k^{\frac{\nu}{1-\nu}}$, it holds that
\begin{align}
&\min_{k=0,\ldots,N-1}\mathbb{E}\left[\dist(0,\partial h(x_{k+1})) \right]\\
&\le \frac{ (1+L\nu)\sqrt{N} \sqrt{ \Delta  + \sigma^2 \sum_{k = 0}^{N-1} \left( \frac{\gamma_k}{m_k}\right)+\frac{L \sigma^{1+\nu} }{1+\nu} \sum_{k = 0}^{N-1} \gamma_k^{1+\nu}}+L(1-\nu) \sum_{k = 0}^{N-1} \gamma_k^{\frac{\nu+1}{2(1-\nu)}} + \sigma \sum_{k = 0}^{N-1} \sqrt{\frac{\gamma_k}{m_k}}  }{\sum_{k = 0}^{N-1} \sqrt{\gamma_k}}.
\end{align}
On the other hand, if $\nu=1$, the inequality \eqref{eq:leq_11} can be evaluated as
\begin{align}
   &\min_{k=0,\ldots,N-1}\mathbb{E}[ \dist(0,\partial h(x_{k+1})) ]\\
   &\le \frac{\sqrt{ \sum_{k=0}^{N-1} (1+ L\bar{\gamma} )^2 } \sqrt{ \Delta  + \sigma^2 \sum_{k = 0}^{N-1} \left( \frac{\gamma_k}{m_k}\right)+\frac{L \sigma^{2} }{2} \sum_{k = 0}^{N-1} \gamma_k^{2}}+ \sigma \sum_{k=0}^{N-1}  \sqrt{\frac{\gamma_k}{m_k}}}{\sum_{k=0}^{N-1} \sqrt{\gamma_k}}\\
   &\le \frac{(1+L \bar{\gamma}  )\sqrt{ N } \sqrt{ \Delta  + \sigma^2 \sum_{k = 0}^{N-1} \left( \frac{\gamma_k}{m_k}\right)+\frac{L \sigma^{2} }{2} \sum_{k = 0}^{N-1} \gamma_k^{2}}+ \sigma \sum_{k=0}^{N-1}  \sqrt{\frac{\gamma_k}{m_k}}}{\sum_{k=0}^{N-1} \sqrt{\gamma_k}}.
\end{align}
we have the desired result.
\end{proof}

From Theorem \ref{grad_bound}, we derive an explicit convergence rate and the stochastic gradient call complexity for finding an $\varepsilon$-stationary solution $x$ that satisfies
\begin{align}
     \mathbb{E}\left[\dist(0,\partial h(x)) \right] \le \varepsilon,
\end{align}
where $\varepsilon > 0$.

\begin{corollary}\label{convergence_result}
    Suppose that Assumptions \ref{holder_con} and \ref{assumption_alg} hold and $\{x_k\}$ is generated by Algorithm \ref{SPG} without termination. Let stepsizes and batch sizes be set to 
    \begin{align}
    &\gamma_k = \frac{\tilde{\gamma}}{(1+k)^{\beta_1}},\label{eq:stepsize}\\
    & m_k =  \left\lceil \tilde{m} \left(1+k\right)^{\beta_2}\right\rceil\label{eq:batchsize}
    \end{align}
    with $\beta_1 = \frac{1}{1+\nu},~ \beta_2 = \frac{\nu}{1+\nu},~ \tilde{\gamma}>0,~ \tilde{m}>0$, respectively.
    Then, we have
    \begin{equation}
        \min_{k=0,\ldots,N-1}\mathbb{E}\left[\dist(0,\partial h(x_{k+1})) \right] = \mathcal{O}\left(\sqrt{\frac{\log N}{N^{\frac{\nu}{1+\nu}}}}\right).
    \end{equation}
    Moreover, the stochastic gradient call complexity to obtain an $\varepsilon$-stationary solution is of the order $\varepsilon^{-\frac{4\nu+2}{\nu-\delta(1+\nu)}}$ for any $\delta \in (0, \frac{\nu}{1+\nu})$.
\end{corollary}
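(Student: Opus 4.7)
The plan is to substitute the chosen stepsizes $\gamma_k = \tilde\gamma(1+k)^{-1/(1+\nu)}$ and batch sizes $m_k \ge \tilde m(1+k)^{\nu/(1+\nu)}$ into every sum appearing in the bound of Theorem \ref{grad_bound} and then estimate each using the elementary comparison $\sum_{k=0}^{N-1}(1+k)^{-\alpha}$ is $\Theta(N^{1-\alpha})$ for $\alpha<1$, $\Theta(\log N)$ for $\alpha=1$, and $\Theta(1)$ for $\alpha>1$. The key exponent identities are $\beta_1+\beta_2=1$ (so that $\gamma_k/m_k \lesssim (1+k)^{-1}$) and $(1+\nu)\beta_1=1$ (so that $\gamma_k^{1+\nu}\lesssim (1+k)^{-1}$), which are exactly what produce the logarithmic terms.

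Concretely, I would compute in order:
\begin{align}
\sum_{k=0}^{N-1}\frac{\gamma_k}{m_k} &\lesssim \log N, \qquad \sum_{k=0}^{N-1}\gamma_k^{1+\nu}\lesssim \log N,\\
\sum_{k=0}^{N-1}\sqrt{\gamma_k/m_k} &\lesssim \sqrt{N}, \qquad \sum_{k=0}^{N-1}\sqrt{\gamma_k}\gtrsim N^{(1+2\nu)/(2(1+\nu))}.
\end{align}
Putting these into the bound of Theorem \ref{grad_bound}(b) (the case $\nu=1$ is immediate since $\bar\gamma$ and $\nu$ are constants), the dominant part of the numerator is $\sqrt{N}\sqrt{\log N}$, so dividing by $\sum\sqrt{\gamma_k}$ gives
\begin{align}
\min_{k}\mathbb{E}[\dist(0,\partial h(x_{k+1}))] \;=\; \mathcal{O}\!\left(\frac{\sqrt{N\log N}}{N^{(1+2\nu)/(2(1+\nu))}}\right)\;=\;\mathcal{O}\!\left(\sqrt{\frac{\log N}{N^{\nu/(1+\nu)}}}\right),
\end{align}
since $(1+2\nu)/(2(1+\nu))-1/2=\nu/(2(1+\nu))$. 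For case (a) of Theorem \ref{grad_bound} there is the extra term $L(1-\nu)\sum_{k}\gamma_k^{(\nu+1)/(2(1-\nu))}$, whose summand has exponent $1/(2(1-\nu))$; a short case split on whether this exponent exceeds, equals, or falls below $1$ shows the resulting sum is $\mathcal{O}(N^{(1-2\nu)/(2(1-\nu))})$ (the worst case), and a direct comparison of exponents verifies this is dominated by $N^{(1+2\nu)/(2(1+\nu))}\cdot N^{-\nu/(2(1+\nu))}$, so it does not worsen the rate.

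For the second claim, the total number of stochastic gradient calls through iteration $N$ satisfies $\sum_{k=0}^{N-1} m_k \lesssim N^{(1+2\nu)/(1+\nu)}$. To hit an $\varepsilon$-stationary point I use the first part: for any $\delta\in(0,\nu/(1+\nu))$, $\sqrt{\log N}\le N^{\delta/2}$ for $N$ large, so the rate is $\mathcal{O}(N^{-(\nu/(1+\nu)-\delta)/2})$; setting this $\le\varepsilon$ gives $N=\mathcal{O}(\varepsilon^{-2(1+\nu)/(\nu-\delta(1+\nu))})$, and plugging into the total-calls estimate yields the stated $\varepsilon^{-(4\nu+2)/(\nu-\delta(1+\nu))}$ bound.

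The only real obstacle is the term $\sum\gamma_k^{(\nu+1)/(2(1-\nu))}$ in case (a): its growth rate depends non-monotonically on $\nu$ and must be carefully shown to be absorbed into the $\sqrt{N\log N}$ term after dividing by the denominator; everything else is bookkeeping with $p$-series.
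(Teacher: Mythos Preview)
Your proposal is correct and follows essentially the same route as the paper: substitute the specific $\gamma_k,m_k$ into the bound of Theorem~\ref{grad_bound}, reduce each sum to a $p$-series via integral comparison (with the identities $\beta_1+\beta_2=1$ and $(1+\nu)\beta_1=1$ producing the $\log N$ terms), handle the extra term $\sum_k\gamma_k^{(\nu+1)/(2(1-\nu))}$ in case~(a) by the same case split on whether $1/(2(1-\nu))$ is $<1$, $=1$, or $>1$ to conclude it is $\mathcal{O}(N^{1/2})$, and then derive the complexity by bounding $\log N\le N^{\delta}$ and summing $m_k=\mathcal{O}(N^{(2\nu+1)/(1+\nu)})$. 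The paper's proof is exactly this computation, so there is nothing to add.
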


\begin{proof}
If $\nu \in (0,1)$, replacing the positive constants by $C_1,\ldots,C_5$ for the upper bound in Theorem \ref{grad_bound} (a) yields
    \begin{align}
        &\min_{k=0,\ldots,N-1}\mathbb{E}\left[\dist(0,\partial h(x_{k+1})) \right]\\
        &\le \frac{\sqrt{N} \sqrt{ C_1  + C_2\sum_{k = 0}^{N-1} \left( \frac{\gamma_k}{m_k}\right)+C_3\sum_{k = 0}^{N-1} \gamma_k^{1+\nu}}+C_4\sum_{k = 0}^{N-1}\gamma_k^{\frac{\nu+1}{2(1-\nu)}} + C_5\sum_{k = 0}^{N-1} \sqrt{\frac{\gamma_k}{m_k}}  }{\sum_{k = 0}^{N-1} \sqrt{\gamma_k}}\\
        &\le \frac{\sqrt{N} \sqrt{ C_1  + \left(\frac{C_2 \tilde{\gamma}}{\tilde{m}} + C_3\tilde{\gamma}^{1+\nu} \right)\sum_{k = 0}^{N-1} \frac{1}{1+k}}+C_4 \tilde{\gamma}^{\frac{\nu+1}{2(1-\nu)}} \sum_{k = 0}^{N-1} \frac{1}{(1+k)^{\frac{1}{2(1-\nu)}}} + C_5 \sqrt{\frac{\tilde{\gamma}}{\tilde{m}}}\sum_{k = 0}^{N-1} \frac{1}{\sqrt{1+k}}  }{\sqrt{\tilde{\gamma}}\sum_{k = 0}^{N-1} \frac{1}{(1+k)^{\frac{1}{2(1+\nu)}}}}.
    \end{align}
Since it follows that
    \begin{align}
        &\sum_{k=0}^{N-1} \frac{1}{(1+k)^\frac{1}{2(1+\nu)}} \geq \int_{0}^{N} \frac{1}{(1+t)^\frac{1}{2(1+\nu)}} dt = \frac{2(1+\nu)}{1+2\nu}\left\{(1+N)^{\frac{1+2\nu}{2(1+\nu)}} -1 \right\},\\
        &\sum_{k=0}^{N-1}\frac{1}{(1+k)} \le 1+\int_{0}^{N-1}\frac{1}{1+t} dt = 1+ \log N,\\
        & \sum_{k=0}^{N-1}\frac{1}{(1+k)^{\frac{1}{2(1-\nu)}}} \le 1+\int_{0}^{N-1} \frac{1}{(1+t)^{\frac{1}{2(1-\nu)}}} dt = \left\{
        \begin{alignedat}{3}
            & 1+\log N,    &\quad&   \nu = \frac{1}{2},   \\
            & 1+\frac{2(1-\nu)}{1-2\nu}\left\{N^{\frac{1-2\nu}{2(1-\nu)}} -1 \right\}, &\quad&  \nu \neq \frac{1}{2}
        \end{alignedat} \right. =\mathcal{O}(N^\frac{1}{2}),\\
        &\sum_{k=0}^{N-1}\frac{1}{\sqrt{1+k}} \le  1+\int_{0}^{N-1}\frac{1}{\sqrt{1+t}} dt =
        1+2(N^\frac{1}{2}-1),
    \end{align}
we have
\begin{align}\label{eq:con_rate}
    \min_{k=0,\ldots,N-1}\mathbb{E}\left[\dist(0,\partial h(x_{k+1})) \right] = \mathcal{O}\left( \sqrt{\frac{\log N}{ N^{\frac{\nu}{1+\nu}}}}\right).
\end{align}
Similarly, when $\nu = 1$, we achieve the convergence rate from the upper bound in Theorem \ref{grad_bound} (b). 

Since $\log N=\mathcal{O}(N^\delta)$ for any $\delta \in \left(0,\frac{\nu}{1+\nu}\right)$, a number of iterations of the order $\varepsilon^{-\frac{2(1+\nu)}{\nu -\delta(1+\nu) }}$ is required to obtain an $\varepsilon$-stationary solution.
In addition, the total number of stochastic gradient calls up to iteration $N-1$ can be evaluated as
\begin{align}
\sum_{k=0}^{N-1} m_k =  \sum_{k=0}^{N-1} \left\lceil \tilde{m} \left(1+k\right)^{\frac{\nu}{1+\nu}}\right\rceil
\le \tilde{m}\left( N + 1 + \sum_{k=0}^{N-1}\left(1+k\right)^{\frac{\nu}{1+\nu}} \right)
&\le \tilde{m}\left( N + 1 + \int_{0}^{N-1}\left(1+t\right)^{\frac{\nu}{1+\nu}} dt \right) \\
&= \mathcal{O}\left( N ^\frac{2\nu+1}{1+\nu}\right).
\end{align}
Consequently, the stochastic gradient call complexity is of the order $\varepsilon^{-\frac{4\nu+2}{\nu -\delta(1+\nu)}}$ for any $\nu \in (0,1]$ and $\delta \in \left(0,\frac{\nu}{1+\nu}\right)$.
\end{proof}

From Corollary \ref{convergence_result}, we need only to know H\"older exponent $\nu$ and do not need to specify $L$ to execute Algorithm \ref{SPG} with guaranteed convergence results.
Thus, Algorithm \ref{SPG} is a straightforward approach that only requires the determination of appropriate diminishing stepsizes and increasing batch sizes based on $\nu$, under suitable assumptions and the availability of stochastic gradients.

\section{Numerical experiments}\label{sec:experiment}
Two numerical experiments to demonstrate the effectiveness of our proposed methods are conducted.
In Subsection \ref{experiment_1}, based on the convergence results of Corollary \ref{convergence_result}, we confirm that our proposed methods reduce the ruin probability.
In Subsection \ref{experiment_2}, we statistically compare the adjustment coefficient approach \citep{hald2004maximisation} with our proposed methods.
We consider both proportional reinsurance and investment as strategies in Subsection \ref{experiment_1}.
On the other hand, only proportional reinsurance is considered in Subsection \ref{experiment_2} because the adjustment coefficient approach can only handle reinsurance.
Algorithms \ref{SPG} were terminated when the number of iterations reached $N_{\max}$.
We used the function $w(t) = t^{\frac{1}{8}}(T-t)^{\frac{1}{8}}$ and $T=5$.
For the surplus model \eqref{eq:surplus_model}, we also set the premium rate $c_1$ and the reinsurance premium rate $(1-b)c_2$ to $\lambda (1+\theta)\mathbb{E}[X_1]$ and $\lambda (1+\zeta)(1-b)\mathbb{E}[X_1]$, respectively, based on the expected value principle, where $\theta>0$ is the safety loading of the insurer and $\zeta>0$ is the safety loading of the reinsurer.

\subsection{Convergence behavior of ruin probability}\label{experiment_1}
In this subsection, we choose $\lambda = 40,~r = 0.05,~u=640,~\theta = 0.08,~\underline{b} = 0.19$, and $\zeta = 0.32$ for the surplus model \eqref{eq:surplus_model}.
Assume that the claim size follows a gamma distribution $\Gamma(5,3)$.
We consider three patterns of the number of assets in the investment: $n = 11,101,1001$.
Suppose that $S_t^{(1)}\equiv 1$ and $S_t^{(j)}$ is defined as
\begin{align}\label{brownian_motion}
    S_t^{(j)} = \exp\left\{ \left(\mu_j - \frac{\sigma_j^2}{2}\right)t + \sigma_j W_t^{(j)} \right\},
\end{align}
for $j=2,\ldots,n$, where $\{W_t^{(j)}\}_{j=2}^n$ are i.i.d. standard Brownian motions with $W_0^{(j)}=0$, $\mu_j\in\mathbb{R}$, and $\sigma_j>0$.
The parameters $\{\mu_j\}_{j=2}^n$ and $\{\sigma_j\}_{j=2}^n$ follow from i.i.d. uniform distributions $U(-0.05,0.1)$ and $U(0.005,0.01)$, respectively.
The stepsizes and batch sizes are set as in \eqref{eq:stepsize} and \eqref{eq:batchsize} with $(\beta_1,\beta_2) = (0.67,0.33),(0.7,0.5),(0.7,0.2),(0.9,0.5)$, $\tilde{m}=1$, and $\tilde{\gamma} = 0.1,1,10$.
These step and mini-batch sizes guarantee convergence of the upper bound of the Theorem \ref{grad_bound} to zero, especially $\beta_1 = 0.67,~\beta_2 = 0.33$ is suggested in Corollary \ref{convergence_result}. 
Indeed, the existence of any moments of $S_T^{(j)}$ and $X_i$ in our setting allows the H\"older exponent of $\nabla F$ to be $\frac{33}{67}$.

Figures \ref{fig_conv_11}-\ref{fig_conv_1001} show the convergence behaviors of the minimum ruin probability $\min_{k=0,\ldots,N}F(p_k,b_k)$.
We used the Monte Carlo method with sample size $10,000$ to approximate the ruin probability $F(p_k,b_k)$ at iteration $k$.
From Figures \ref{fig_conv_11}-\ref{fig_conv_1001}, we see that our proposed method succeeded in reducing the ruin probability.
In addition, the ruin probability decreased as the number of investable risk assets or cash assets increased.
This result is intuitive, as having more assets to choose from allows for a better selection of high-performing assets, indicating that our approach works properly.
Observing Figures \ref{fig_conv_11}-\ref{fig_conv_1001} with respect to the initial step size, we see that the objective function decreased more significantly with the larger initial stepsize.

\begin{figure}[H]
\begin{center}
    \includegraphics[width=1\textwidth]{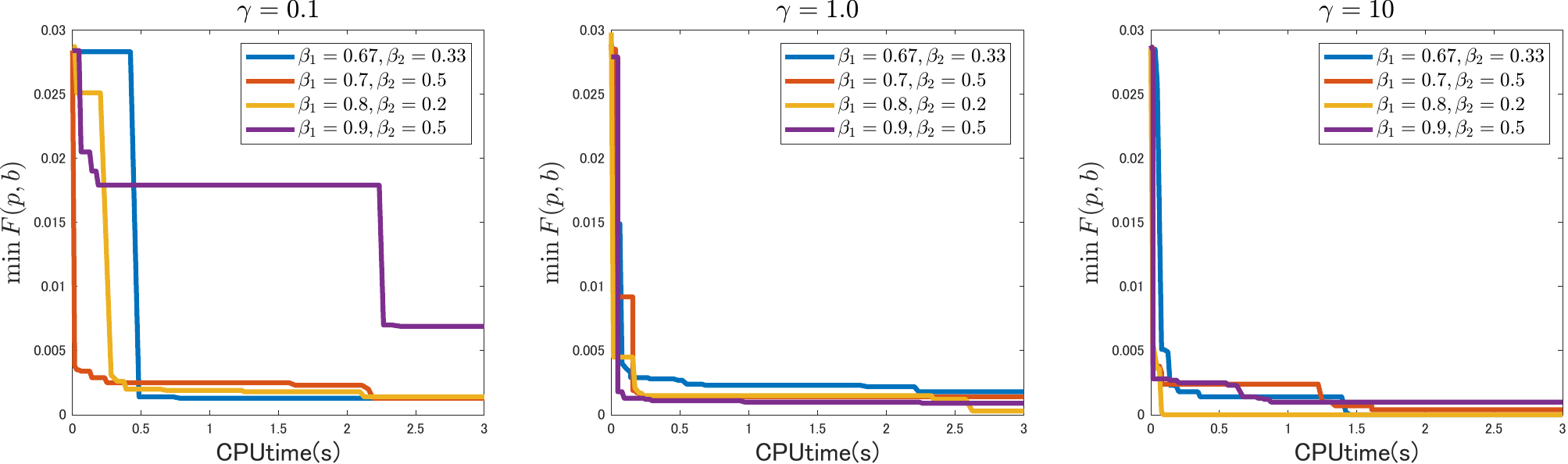}    
\end{center}
\caption{Convergence behaviors of the minimum ruin probability with $n= 11$.}
\label{fig_conv_11}
\end{figure}

\begin{figure}[H]
\begin{center}
    \includegraphics[width=1\textwidth]{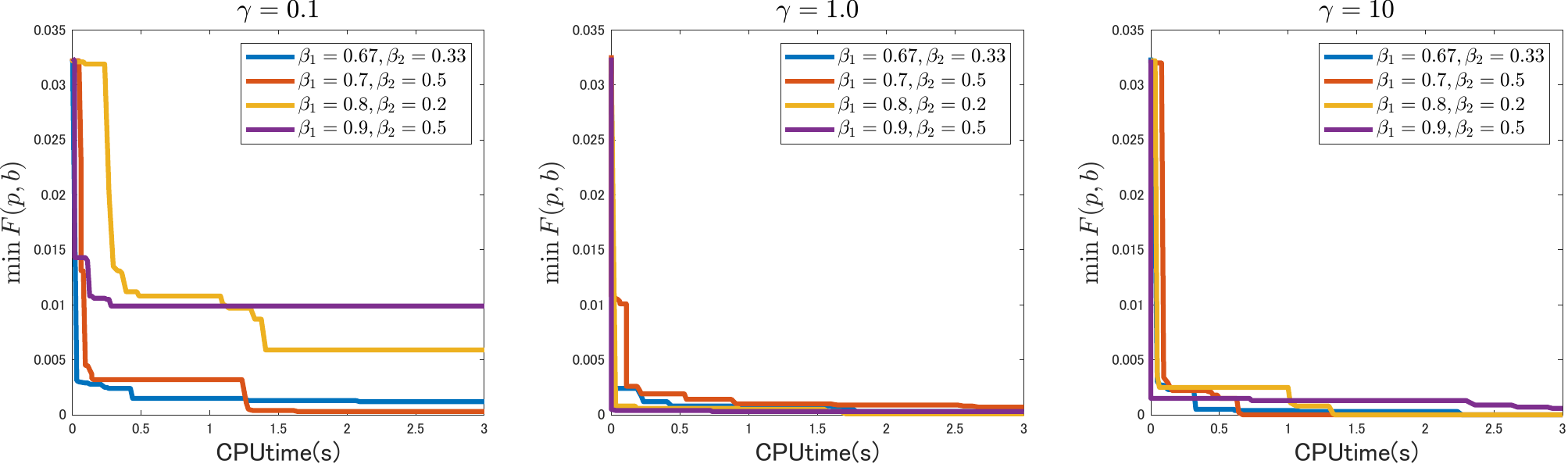}    
\end{center}
\caption{Convergence behaviors of the minimum ruin probability with $n= 101$.}
\label{fig_conv_101}
\end{figure}

\begin{figure}[H]
\begin{center}
    \includegraphics[width=1\textwidth]{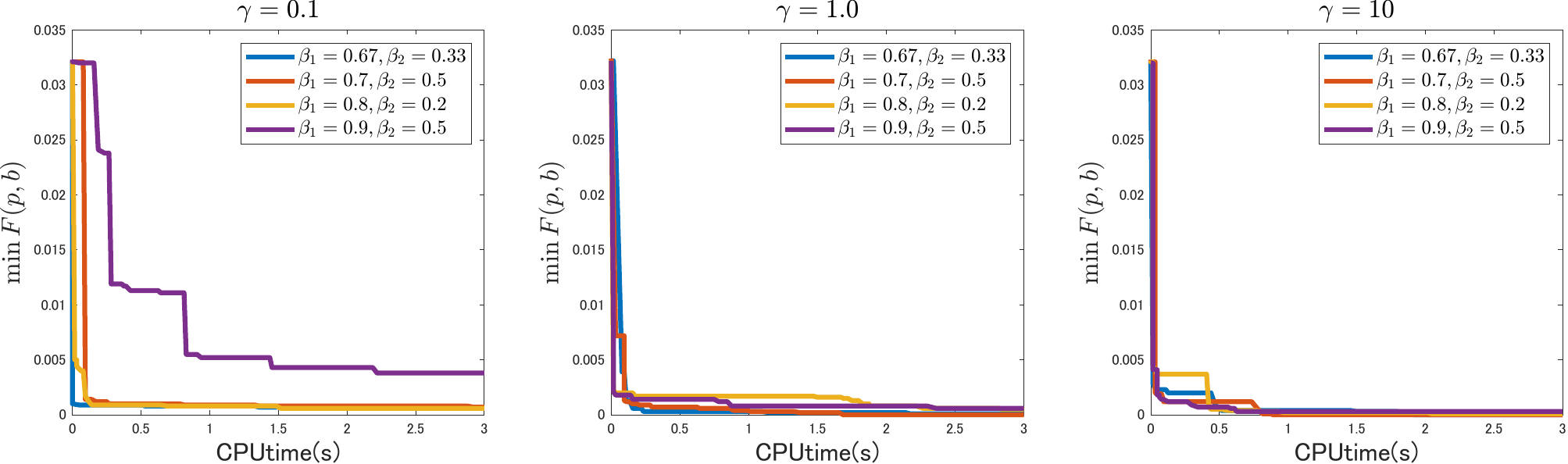}    
\end{center}
\caption{Convergence behaviors of the minimum ruin probability with $n=1001$.}
\label{fig_conv_1001}
\end{figure}

\subsection{Comparison with the adjustment coefficient approach}\label{experiment_2}
In this subsection, we choose  $\lambda = 40,~r = 0.03,~u=200,~\theta = 0.08,~\underline{b} = 0.07$, and $\zeta = 0.15$ for the surplus model \eqref{eq:surplus_model}.
We consider three different gamma distributions $\Gamma(5, 3), \Gamma(10, 3)$, and $\Gamma(15, 3)$ for the claim size.
The stepsizes and batch sizes are set as in \eqref{eq:stepsize} and \eqref{eq:batchsize} with $\beta_1 = 0.67$, $\beta_2 = 0.33$, $\tilde{m}=1$ and $\tilde{\gamma} = 0.01,0.1,1.0$.

To compare the adjustment coefficient approach with our proposed method, we provide an upper bound of $\mathbb{P}(U_T(b) < 0)$. we consider the following surplus process $\tilde{U}_t(b)$ such that $\tilde{U}_t(b) \le U_t(b)$ for any $t\in[0,T]$.
\begin{align}\label{eq:surplus_model_fix_infration}
    \begin{split}
    	\tilde{U}_t(b) =~ & u+c_1t-(1-b)c_2t - \sum_{i=1}^{N_t} be^{rT}  X_i.
    \end{split}
\end{align}
Then, we have
\begin{align}
    \mathbb{P}(U_T(b) < 0) &\le \mathbb{P}\left( \inf_{0\le t\le T} U_t(b) <0 \right) \le \mathbb{P}\left( \inf_{0 \le t\le T} \tilde{U}_t(b) <0 \right) \le \mathbb{P}\left( \inf_{t\geq 0} \tilde{U}_t(b) <0 \right) < e^{-R(b)u},
\end{align}
where the last inequality is the Lundberg inequality and  $R(b)$ is the adjustment coefficient.
From \citep[Example 1]{hald2004maximisation}, the optimal retention level $b^*$ obtained by the
adjustment coefficient approach is expressed as
\begin{align}\label{eq:ad_sol}
    b^* = \min \left(\frac{\alpha(\zeta -\theta)\left(1-(1+\zeta)^{-1/(\alpha+1)}\right)}{\alpha \zeta + (\alpha + 1)(1-(1+\zeta)^{\alpha/(\alpha+1)})}, 1\right),
\end{align}
where $\alpha$ is the shape parameter of the gamma distribution, namely, $\alpha = 5,10,15$.

Figures \ref{fig_box_0.01}-\ref{fig_box_1} show box plots of ruin probabilities, (A) $\mathbb{P}(U_T < 0)$ and (B) $\mathbb{P}(\inf_{0\le t \le T} U_t < 0)$, for solutions obtained by executing our method 50 times.
Additionally, ruin probabilities for the solution \eqref{eq:ad_sol} obtained by the adjustment coefficient approach are shown as a blue dot in Figures \ref{fig_box_0.01}-\ref{fig_box_1}.
To estimate these ruin probabilities, we performed the Monte Carlo method with sample size $10,000$.

In Figure \ref{fig_box_0.01}, the medians of the ruin probabilities obtained using our proposed method and the ruin probabilities obtained by the adjustment coefficient approach are at the same level.
On the other hand, as shown in Figures \ref{fig_box_0.1}-\ref{fig_box_1}, we see that the median of the ruin probabilities obtained using our proposed method appeared to be lower than that of the adjustment coefficient approach.
These results show that it is possible to obtain a solution better than that obtained by the adjustment coefficient approach with high probability by running our algorithm from some initial values.
In addition, observing Figure \ref{fig_box_0.01}-\ref{fig_box_1} with respect to the initial step size, both ruin probabilities decreased more when the initial stepsize was larger.

\begin{figure}[H]
\begin{center}
    \includegraphics[width=0.9\textwidth]{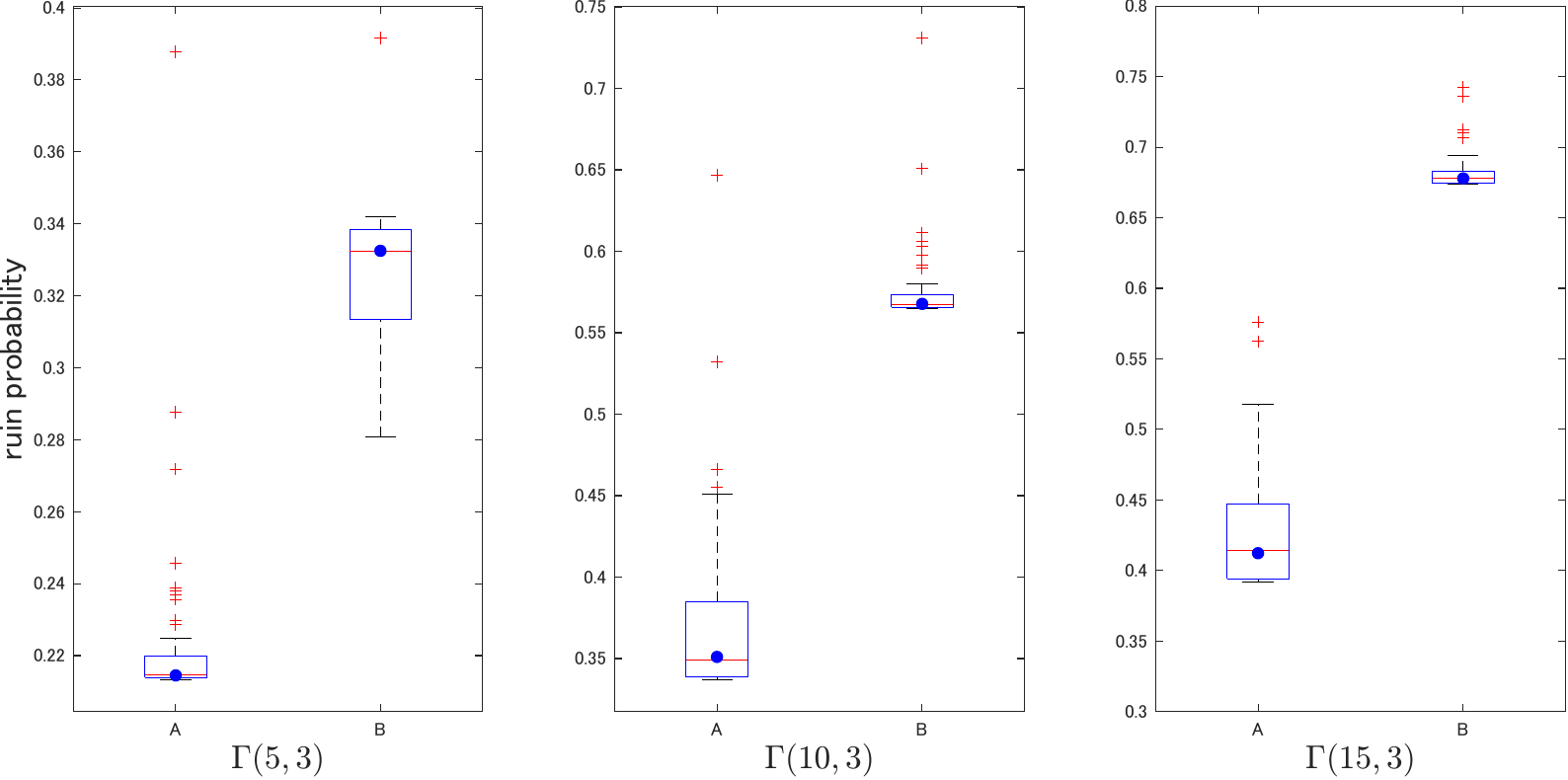}    
\end{center}
\caption{Ruin probabilities of the proposed method and the adjustment coefficient approach with $\tilde{\gamma} = 0.01$.}
\label{fig_box_0.01}
\end{figure}

\begin{figure}[H]
\begin{center}
    \includegraphics[width=0.9\textwidth]{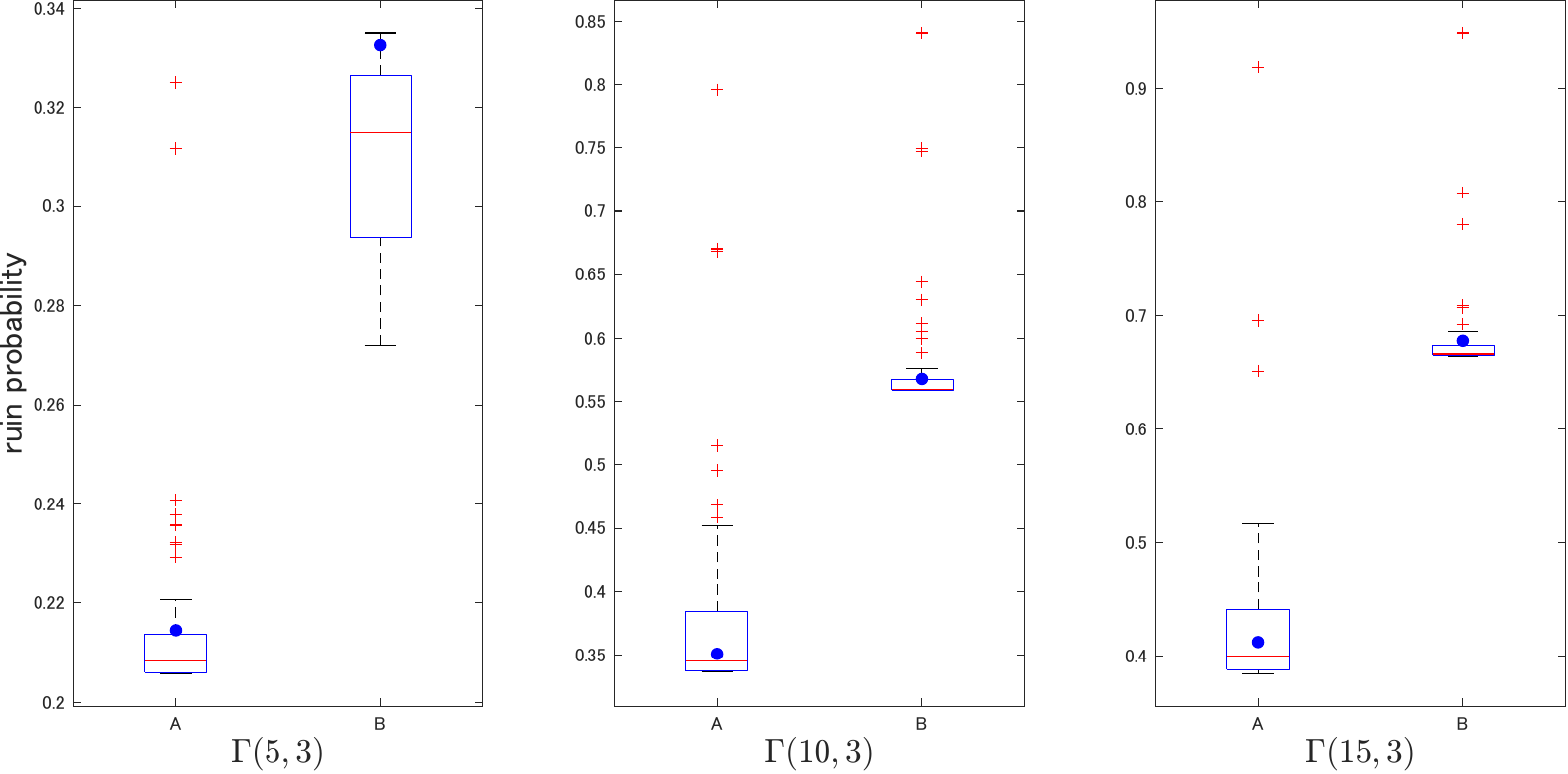}    
\end{center}
\caption{Ruin probabilities of the proposed method and the adjustment coefficient approach with $\tilde{\gamma} = 0.1$.}
\label{fig_box_0.1}
\end{figure}

\begin{figure}[H]
\begin{center}
    \includegraphics[width=0.9\textwidth]{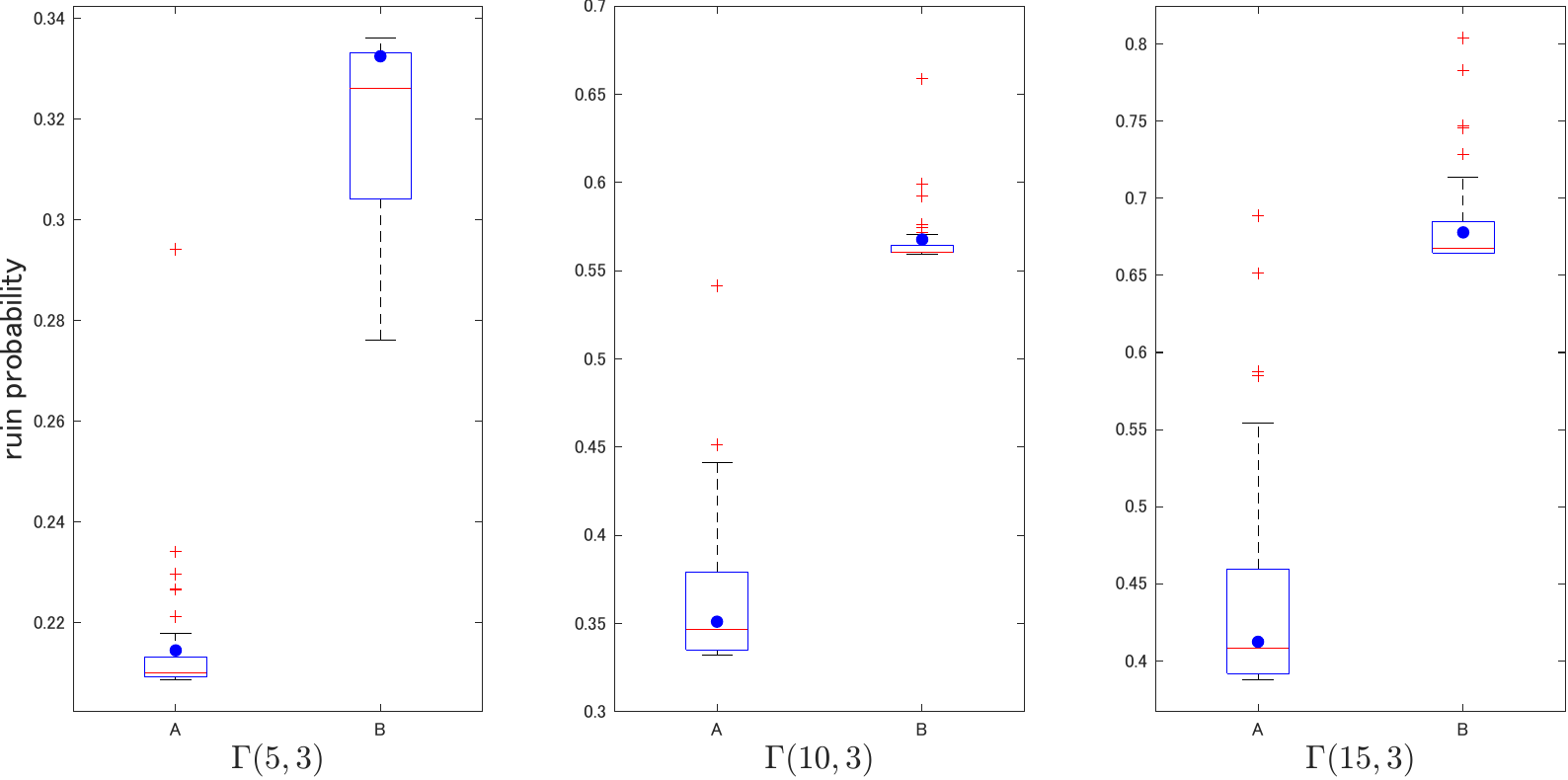}    
\end{center}
\caption{Ruin probabilities of the proposed method and the adjustment coefficient approach with $\tilde{\gamma} = 1.0$.}
\label{fig_box_1}
\end{figure}

\section{Concluding Remarks}\label{sec:conclusion}
This paper has proposed a stochastic projected gradient method combined with Malliavin calculus to find optimal investment and reinsurance strategies.
Firstly, the H\"older continuity of the gradient of the ruin probability has been shown.
Secondly, we have considered a general constrained nonconvex optimization problem under the H\"older condition and have established convergence results of the stochastic projected gradient method.
Finally, our numerical experiments have illustrated the effectiveness of our proposed method.
An important direction for future work is to further develop our approach for the ruin probability \eqref{eq:ruin_prob_finte}, as it serves as a more realistic probability of ruin than the ruin probability \eqref{eq:ruin_prob}.

\bibliography{reference.bib}
\bibliographystyle{plainnat}

\end{document}